\newtheorem{proposition}{Proposition}
\newtheorem{lemma}{Lemma}
\newtheorem{theorem}{Theorem}
\newtheorem{cor}{Corollary}
\newcolumntype{M}[1]{>{\centering\arraybackslash}m{#1}}
\DeclareMathOperator*{\argmax}{arg\,max}
\newcommand{\R}{\mathbb R}
\newcommand{\kl}{D_{\mathrm{KL}}}
\newcommand{\ent}{\mathcal H}
\newcommand{\mmax}{M_{\text{max}}}  
\newcommand{\raa}{N_{\text{AA}}}
\newcommand{\naa}{n_{\text{AA}}}
\title{Anderson Acceleration for Partially Observable Markov Decision Processes: A Maximum Entropy Approach\thanks{This work was supported in part by  the National Research Foundation of Korea funded by the MSIT(2020R1C1C1009766), the Information and Communications Technology Planning and Evaluation (IITP) grant funded by MSIT(2020-0-00857), and Samsung Electronics. A preliminary version of this work was presented at the 60th IEEE Conference on Decision and Control~\cite{ermis2021on}.}}
\author{Mingyu Park
\and
Jaeuk Shin
\and
 Insoon Yang\thanks{M. Park, J. Shin, and I. Yang are with the Department of Electrical and Computer Engineering, Automation and Systems Research Institute,  Seoul National University, Seoul 08826, Korea, 
        {\tt\small \{pmg1202, sju5379, insoonyang\}@snu.ac.kr}} 
        }
\date{}
\begin{document}
\maketitle
\pagestyle{myheadings}
\thispagestyle{plain}

\begin{abstract}
Partially observable Markov decision processes (POMDPs) is a rich mathematical framework that embraces a large class of complex sequential decision-making problems under uncertainty with limited observations.
However, the complexity of POMDPs poses various computational challenges, motivating the need for an efficient algorithm that rapidly finds a good enough suboptimal solution. 
In this paper, 
we propose a novel accelerated offline POMDP algorithm exploiting Anderson acceleration (AA) that is capable of efficiently solving fixed-point problems using previous solution estimates.
Our algorithm is based on
the Q-function approximation (QMDP) method to alleviate the scalability issue inherent in POMDPs. 
Inspired by the quasi-Newton interpretation of AA, 
we propose a maximum entropy variant of QMDP,   which we call \emph{soft QMDP}, to fully benefit from AA. 
We prove that  the overall algorithm  converges to the suboptimal solution obtained by soft QMDP.
Our algorithm can also be implemented in a model-free manner using simulation data. 
Provable error bounds on the residual and the solution are provided to examine how the simulation errors are propagated through the proposed algorithm. 
 Finally, the performance of our algorithm is tested on several benchmark problems.   
 According to the results of our experiments, the proposed algorithm converges significantly faster without degrading the solution quality compared to its standard counterparts.     
\end{abstract}

\section{Introduction}\label{sec:intro}

Partially observable Markov decision processes (POMDPs) have served as a useful mathematical framework for
various sequential decision-making problems with limited noisy observations~\cite{Astrom1965}, including
autonomous driving~\cite{bai2015intention}, 
collision avoidance~\cite{holland2013optimizing}, repair problems~\cite{bhattacharya2020reinforcement}, 
and multi-agent systems~\cite{capitan2013decentralized}.
Unfortunately, computing an exact optimal policy of a POMDP is a challenging task~\cite{littman1996algorithms}.
When the system state is not fully observable, the agent at best maintains its belief over the possible system states. This makes the problem computationally intractable even in the case of finite POMDPs
as the policy is a function of the continuous belief instead of being a function defined over a discrete set of states.
To overcome this computational issue, many works have focused on finding an approximate solution via online or offline algorithms~\cite{kochenderfer2015decision}.
Online POMDP algorithms~\cite{Silver2010,ye2017despot} mostly use tree search in the belief space. However, the quality of the chosen actions heavily relies on the time budget of the tree search, which is problematic especially when the problem size is large.
On the other hand, the goal of offline POMDP solvers is to find an approximate value function determined by a finite set of vectors called \emph{$\alpha$-vectors}, relying on the fact that the value function of a POMDP is a piecewise linear and convex function~\cite{sondik1971optimal}.

Offline POMDP methods can be classified into two types depending on how the set of $\alpha$-vectors are approximated: 
$(i)$ point-based value iteration (PBVI) and its variants~\cite{pineau2003point,spaan2005perseus,kurniawati2008sarsop}, 
and $(ii)$ state-space methods~\cite{littman1995learning,hauskrecht2000value,ermis2021on}.
Algorithms in the first category retain a set of $\alpha$-vectors and add new $\alpha$-vectors to the set 
based on the point-based backup rule with sampled beliefs.
Despite their near-optimality,
those methods often suffer from memory complexity due to the exponential increase in the number of $\alpha$-vectors.
To resolve this issue, PBVI~\cite{pineau2003point} and PERSEUS~\cite{spaan2005perseus} algorithms explicitly bound the size of the $\alpha$-vectors at each iteration, and SARSOP~\cite{kurniawati2008sarsop} prunes the $\alpha$-vectors while exploring through the tree search in the belief space.
Still, these methods require a large set of $\alpha$-vectors and a higher per-step computational complexity compared to the algorithms in the second category.
Conversely, the state-space methods consider a fixed number of $\alpha$-vectors, thereby naturally circumventing the memory issue.
Specifically, the Q-function approximation, called QMDP~\cite{littman1995learning} and the fast informed bound (FIB)~\cite{hauskrecht2000value},  assign a single $\alpha$-vector to each action,
and thus the size of the $\alpha$-vector set remains fixed throughout all iterations. 
However, this approach requires a large number of iterations until convergence, and it is not straightforward to use the method when the exact model information is unavailable.

The main objective of this work is to develop an accelerated offline state-space POMDP algorithm to alleviate the issue of slow convergence.
Our method uses Anderson acceleration (AA) to speed up the solution of fixed-point problems~\cite{anderson1965iterative}.
Unlike the standard fixed-point iteration (FPI) that repeatedly applies an operator of interest to the most recent estimate, AA updates its iterate as a linear combination of previous iterates with weights obtained by solving a simple optimization problem.
Recently, there have been attempts to use AA in algorithms to solve Markov decision processes (MDPs) or reinforcement learning (RL) problems~\cite{geist2018anderson,shi2019regularized,sun2021damped,Ermis2020}.
It has been empirically shown  that AA improves convergence or learning speed in the fully observable setting.  
However, to the best of our knowledge, the use of AA in POMDPs has not yet been actively studied.

 Another key component of our method is the maximum entropy regularization of an approximate Bellman operator of interest. 
 The maximum entropy method has been successfully used in various MDP and RL algorithms by inducing a high-entropy policy that is known to encourage exploratory behaviors~\cite{rawlik2012stochastic,haarnoja2017reinforcement,haarnoja2018soft}.    
  While the maximum entropy method has further been generalized to a broader class of regularizers~\cite{geist2019theory}, a game-theoretic setting~\cite{firoozi2022exploratory}, and the regularization of occupancy measures~\cite{neu2017unified}, its use has been restricted to the fully observable MDPs. 
Notably, \cite{savas2022entropy} formulates the POMDP planning problem as an entropy maximization problem. However, this approach is fundamentally different from ours in that it
takes the entropy of the POMDP process as the objective function. 
It is also worth mentioning that other prior works~\cite{melo2005use,molloy2021active} use entropy in POMDP problems for different purposes; they focus on improving the control quality~\cite{melo2005use}
or reducing the uncertainty of the resulting trajectories~\cite{molloy2021active}.
Departing from these directions, 
we use the maximum entropy variant of the state-space method QMDP to rapidly compute an approximate solution of POMDP problems. 

In this paper, we propose a novel accelerated POMDP method with a convergence guarantee, exploiting both AA and maximum entropy regularization. 
For scalability, we adopt QMDP, which is a state-space algorithm that computes one $\alpha$-vector for each action. 
To speed up convergence using AA, we first  interpret QMDP as a solution to a fixed-point problem, and then propose  a maximum entropy variant of QMDP, which we call \emph{soft QMDP}, inspired by the quasi-Newton interpretation of AA. 
To effectively reject potentially bad AA candidates in our algorithm, an adaptive safeguarding technique is devised as well. 
The resulting algorithm converges to the unique fixed point of the soft QMDP operator and can be easily extended to the simulation-based setting without model information. 
Our empirical studies show that the proposed method
significantly outperforms the original QMDP and its AA variant on various benchmark problems in terms of the total number of iterations and the total computation time without degrading the quality of obtained policy. 

The main contributions of this work are summarized as follows:
\begin{enumerate}

\item We propose soft QMDP, which is the maximum entropy variant of QMDP, to fully benefit from AA and analyze how entropy regularization affects the fixed point (Section~\ref{sec:reg-pomdp}).

\item We design a novel POMDP algorithm that seamlessly combines AA and soft QMDP and prove its convergence to the soft QMDP solution (Section~\ref{sec:aa-pomdp}). 

 \item A simulation-based implementation of the proposed algorithm is also provided together with provable error bounds on the residual and the solution (Section~\ref{sec:sim}). 
  
\item According to the extensive experiments conducted on POMDP benchmark problems,  the proposed method significantly speeds up the convergence, saving up to $93\%$ of the number of iterations and  $52\%$ of the total computation time without degrading the quality of the computed policy  compared to the standard QMDP (Section~\ref{sec:num}). 
\end{enumerate}

\section{Background}

In this section, we provide essential notations and preliminaries for the proposed algorithm.

\subsection{Partially Observable Markov Decision Processes}

A partially observable Markov decision process (POMDP) is defined as a tuple $(S, A, T, R, Z, O, \gamma)$, where
\begin{itemize}

\item $S$ is the \emph{state space}, which is the set of all possible states of the environment;
\item $A$ is the \emph{action space}, which is the set of all possible actions executed by the agent;
\item $T: S \times A \times S \to [0,1]$ is the \emph{transition function} that describes the evolution of the environment over time. 
Specifically, given state $s \in S$ and action $a \in A$, $T (s, a, s') = \mathrm{Pr}(s' |s, a)$ denotes the probability of transitioning from the current state $s$ to the next state $s'$ when taking action $a$;
\item $R: S\times A \to [r_{\min}, r_{\max}]$ is the \emph{reward function} that quantifies the utility of each action for each state;
\item $Z$ is the \emph{observation space}, which is a finite set of all possible observations informed to the agent;
\item $O: S \times A \times Z \to [0,1]$ is the \emph{observation function}, 
where $O (s', a, z) = \mathrm{Pr}(z | a, s')$ denotes the probability of observing $z$ if action $a$ is executed and the resulting state is $s'$; and
\item $\gamma \in (0,1)$ is the \emph{discount factor}.
\end{itemize}
In this work, $S$ and $A$ are assumed to be finite sets. 

In a POMDP, events occur in the following order: 
In the current state $s_t$, the agent executes an action $a_t$, and the environment transitions to $s_{t+1}$ according to the transition function $T$. 
The agent then receives an observation $z_{t+1}$ related to $(a_t, s_{t+1})$ according to the observation function $O$. 
The objective of the agent is to maximize the expected discounted total reward, $\mathbb{E} [ \sum_{t=0}^\infty \gamma^t R(s_t, a_t)]$.

Note that the agent is unable to directly observe the states. Instead, the agent has access to observation $z_t \in Z$ that provides partial information about the state $s_t$. 
Let the \emph{history} at stage $t$ be defined as
$h_t := \{a_0, z_1, \ldots, z_{t-1}, a_{t-1}, z_t \}$, which contains all previous actions and observations. 
All relevant information from the history can be summarized in a \emph{belief state}~\cite{Astrom1965}, defined as
\[
b_t (s) := \mathrm{Pr}(s_t = s |h_t, b_0),
\]
which is the posterior distribution of being in state $s$ at stage $t$, given the complete history. 
It is well known that the belief state $b_t$ is a sufficient statistic for the history $h_t$~\cite{smallwood1973optimal}.
Given the previous action $a_{t-1}$ and the current observation $z_t$, the belief state can be updated via Bayes' theorem as follows:
\begin{equation}
\begin{split}
b_t(s') &= \pi(b_{t-1}, a_{t-1}, z_t)(s'):= \frac{1}{\mathrm{Pr}(z_t | b_{t-1}, a_{t-1})} O(s', a_{t-1}, z_t) \sum_{s \in S} T(s, a_{t-1},s') b_{t-1} (s),
\end{split}
\end{equation}
where $\pi (b, a, z)$ is called the \emph{belief state update function}.
The set of admissible belief states is given by $B := \mathcal{P} (S) = \{b \in \mathbb{R}^{|S|} : \sum_{s \in S} b(s) = 1, \; b(s) \geq 0 \: \forall s \in S \}$.

Since the belief state provides all information about the history needed for an optimal solution, a POMDP can be recast as an equivalent \emph{belief-state MDP}, a fully observable MDP whose state space is given by $B$.
In belief-state MDPs, a \emph{policy} $\mu : B \to \mathcal{P}(A)$ is defined as a mapping from the set of the belief states to the set of the probability distributions over $A$. Specifically, $\mu(a|b)$ denotes the probability of executing an action $a$ in a belief state $b$.
Then, an \emph{optimal policy} is an optimal solution to the following belief-state MDP problem:
\begin{align*}
& V^\star(b) :=  \max_{\mu \in M} \; \mathbb{E} \left[
\sum_{t=0}^\infty \gamma^t \sum_{s \in S} b_t (s) \sum_{a \in A} R(s, a) \mu (a|b_t) \middle\vert b_0 = b 
\right],
\end{align*}
where $M$ denotes the set of all admissible policies, and $V^\star: B \to \mathbb{R}$ is the optimal value function.

The solution of the belief-state MDP problem can be obtained via dynamic programming (DP). Specifically, 
$V^\star$ is the unique fixed point of the \emph{Bellman operator} $\mathcal{T}$, which is defined as
\begin{equation}\label{bellman}
\mathcal{T} V (b) := \max_{a \in A} \bigg [ R_B (b, a) + \gamma \sum_{z \in Z} \mathrm{Pr} (z | b, a) V(\pi (b, a, z)) \bigg ],
\end{equation}
where $R_B (b,a) := \sum_{s \in S} b(s) R(s, a)$ and 
$\mathrm{Pr} (z | b, a)$ can be computed as
$\sum_{s' \in S} O(s', a, z) \sum_{s \in S} T(s, a, s') b(s)$.
Therefore, $V^\star$ can
be computed by solving the \emph{Bellman equation} $V^\star = \mathcal{T} V^\star$. 

One of the most fundamental algorithms used to solve the Bellman equation is \emph{value iteration} (VI) which uses the contraction property of the Bellman operator. 
VI starts with an  initial guess $V_0$ and repeatedly performs the Bellman update $V_k := \mathcal{T} V_{k-1}$. 
The key observation made by Smallwood and Sondik~\cite{smallwood1973optimal} is that any iterate $V_k$ is piecewise linear and convex since the Bellman operator preserves the piecewise linearity and convexity of value functions. 
More precisely, $V_k$ can be represented by a set of $|S|$-dimensional vectors $\Gamma_k := \{\alpha_0^k, \ldots, \alpha_{|\Gamma_k|}^k \}$:
\[
V_k (b) = \max_{\alpha \in \Gamma_k} \sum_{s \in S} \alpha (s) b (s),
\]
where each $\alpha_i^k$ is called an $\alpha$-vector.
The standard VI algorithm makes an exact calculation on $\mathcal{T}$ by updating $\alpha$-vectors from the previous set $\Gamma_{k-1}$ to the current set $\Gamma_k$~\cite{sondik1971optimal,cassandra1971}.
Unfortunately, in the worst case, the size of this representation grows exponentially as $|\Gamma_k |$ is in $\mathcal{O}(|A| | \Gamma_{k-1} |^{|Z|})$, which makes the standard VI computationally intractable even for small problems.
Even performing a single iteration is known to be impractical~\cite{shani2013survey} since it requires $\mathcal{O}(|\Gamma_k| |A| |Z| + |A| |S| |\Gamma_k|^{|Z|})$ steps of computation.
Various approximation techniques have been proposed to tackle these issues.

\subsection{QMDP}
State-space POMDP algorithms approximate the value function with a small set of $\alpha$-vectors by associating each action with one $\alpha$-vector. 
Among them, the QMDP method is one of the most popular algorithms~\cite{littman1995learning}.
QMDP updates $\alpha$-vectors with the following rule:
\begin{equation}\label{eq:qmdp}
\alpha^{k+1}_a (s) := R(s, a) + \gamma \sum_{s'} T (s, a, s') \max_{a'} \alpha^k_{a'} (s').
\end{equation}
The update rule~\eqref{eq:qmdp} can be derived from the exact Bellman operator~\eqref{bellman} 
by first decoupling the expectation over $b(s)$ from $R_B(b, a)$,
and then decoupling expectation over $b(s)$ and $O(s',a,z)$ from $\mathrm{Pr}(z | b, a)$.

With an $\alpha$-vector obtained by QMDP, the policy is constructed as follows:
\begin{equation}\label{eq:state-space-policy}
\mu(a|b) = 
\begin{cases}
1 \quad \mbox{if} \quad a \in \argmax_{a \in {A}} b^\top  \alpha_a, \\
0 \quad \mbox{otherwise}.
\end{cases}
\end{equation}
Furthermore, the approximate value function is given by
\begin{equation}\label{eq:val}
\bar V (b) = \max_{a \in A} b^\top \alpha_a.
\end{equation}
During the iterations, the size of $\Gamma_k := \{ \alpha_1^k, \ldots, \alpha_{|A|}^k\}$ remains the same, thereby alleviating the scalability issue of VI in POMDPs.
Precisely, the complexity of each iteration in QMDP is in $\mathcal{O}(|A| |S|^2)$, which is significantly lower than that of the original VI.
To enjoy the benefit, we employ the QMDP method instead of seeking an exact solution.

\subsection{Anderson Acceleration}\label{sec:aa}

In this work, we consider QMDP as an iterative algorithm to find the fixed point of a nonlinear operator that approximates the original Bellman operator. 
Indeed, fixed-point problems arise from various domains ranging from sequential decision making to game theory~\cite{zhang2021stabilizing}. Formally, a fixed-point problem associated with a function $F: \mathbb{R}^N \to \mathbb{R}^N$ aims to find $x \in \mathbb{R}^N$ such that
\[
x = F(x).
\]
As an algorithm to solve such problems, a \emph{fixed-point iteration} (FPI) algorithm repetitively applies the function $F$ to the latest iterate to obtain a better estimate of the fixed point: 
\[
x^{k+1} := F(x^k).
\]
It follows from the Banach fixed-point theorem that the FPI algorithm converges to the unique fixed point of $F$ when the function $F$ is a contraction mapping~\cite{puterman2014markov}. 

To speed up the process of finding the fixed point, \emph{Anderson acceleration} (AA)~\cite{anderson1965iterative} can be used instead of the naive FPI algorithm.
Unlike FPI, AA uses information about the past iterates.
Specifically, at the $k$th iteration, AA maintains the most recent $M^k +1$ estimates $(x^k, \ldots, x^{k- M^k} )$ in memory, where $M^k := \min\{M, k\}$ is the memory size and $M$ is a hyper-parameter~\cite{walker2011anderson}.
The estimate of $x$ is then updated as the following weighted sum of 
$F (x^{k-M^k + i} )$:
\begin{equation}\label{eq:aa_update}
x^{k+1} := \sum_{i=0}^{M^k} w_i^k F  (x^{k- M^k + i} ).
\end{equation}
Here, the weight vector $w^k = \left(w_0^k, \ldots, w_{M^k}^k \right) \in \mathbb{R}^{M^k +1}$ is an optimal solution of the following quadratic programming (QP) problem:
\begin{equation}\label{eq:aa_problem}
\begin{split}
\min \quad & \bigg \| \sum_{i=0}^{M^k} w_i^k G\left(x^{k-M^k + i} \right) \bigg \|_2^2\\
\mbox{s.t.} \quad & \sum_{i=0}^{M^k} w_i^k =1,
\end{split}
\end{equation}
where
\[
G(x) := x- F(x)
\]
is the residual function. 
Accordingly, the memory is updated with the most recent $M^{k+1}+1$ iterates $(x^{k+1}, \ldots, x^{k+1 - M^{k+1}})$ before repeating the process. 
The update rule \eqref{eq:aa_update} can be interpreted as an extrapolation mechanism that exploits the previous iterates in the memory to
rapidly reduce the residual.

As introduced in recent literature~\cite{fang2009two,toth2015convergence,evans2020proof}, the QP problem~\eqref{eq:aa_problem} can be reformulated through the following change of variables:
\begin{equation}\label{eq:relation}
w_i^k = \begin{cases} \xi_0^k & \mbox{if }i = 0 \\
\xi_i^k - \xi_{i-1}^k & \mbox{if }0 < i < M^k \\
1 - \xi_{M^k -1}^k & \mbox{if }i = M^k.
\end{cases}
\end{equation}
This is simply an affine transformation from the $(M^k - 1)$-dimensional affine subspace $\{w \in \mathbb{R}^{M^k} : \textbf{1}^\top w = 1 \}$ onto $\mathbb{R}^{M^k - 1}$.
We further let
\begin{equation}\label{eq:gys}
g^k := G (x^k), \quad y^k := g^{k+1} - g^k, \quad s^k := x^{k+1} - x^k,
\end{equation}
and
\begin{equation}\label{eq:ys}
Y_k := \begin{bmatrix}
y^{k - M^k} & \cdots & y^{k-1}
\end{bmatrix}, \:
S_k := \begin{bmatrix}
s^{k - M^k} & \cdots & s^{k-1}
\end{bmatrix}.
\end{equation}
Then, the original problem~\eqref{eq:aa_problem} is transformed into the following unconstrained least-squares problem
\begin{equation}\label{eq:aa-problem}
\min_{\xi} \left\| g^k - Y_k \xi \right\|^2_2.
\end{equation}
Using~\eqref{eq:relation}, the optimal solution $w^k$ of~\eqref{eq:aa_problem} is easily constructed from the optimal solution $\xi^k$ of~\eqref{eq:aa-problem}.
Assuming that $Y_k$ has a full column rank, the problem \eqref{eq:aa-problem} admits the following closed-form solution: 
\[
\xi^k = (Y_k^\top Y_k)^{-1} Y_k^\top g^k.
\]
However, the problem~\eqref{eq:aa-problem} is prone to numerical instability~\cite{scieur2016regularized} as the condition number of $Y_k^\top Y_k$ can be large depending on the values of the estimates stored in the memory.
This issue is often alleviated using additional regularization techniques~\cite{scieur2016regularized,shi2019regularized,fu2019anderson}.
Among them, we use the Tikhonov regularization in AA proposed in~\cite{fu2019anderson}.

\section{Entropy Regularization in POMDPs}\label{sec:reg-pomdp}

As the first step toward designing an efficient AA algorithm for POMDPs, 
we interpret QMDP as an algorithm to find the fixed point of a nonlinear operator and propose a maximum entropy variant of the operator motivated by the interpretation of AA as the quasi-Newton method. 
The key intuition is that applying  entropy regularization flattens the landscape of the operator,
as explained in Section~\ref{subsec:reg-intuition}.

\subsection{Soft QMDP}

We first propose the maximum entropy variant of QMDP, which we call the soft QMDP. 
To begin, let $\alpha$ denote the vector representation of the set of $\alpha$-vectors $\Gamma = \{ \alpha_1, \ldots, \alpha_{|A|}\}$ in QMDP:
\[
\alpha := \begin{bmatrix}
\alpha_1^\top & \alpha_2^\top & \cdots & \alpha_{|A|}^\top \end{bmatrix}^\top \in \R^{N},
\]
where $N := |S| |A|$.
We define the QMDP operator $F: \R^{N} \to \R^{N}$ as
\begin{equation}\label{eq:qmdp_op}
\begin{split}
(F \alpha) ( s, a) := R(s,a) + \gamma \sum_{s' \in S} T(s, a, s') \max_{a' \in A} \alpha (s', a').
\end{split}
\end{equation}
Then, the QMDP update rule~\eqref{eq:qmdp} can be compactly expressed as
\[
\alpha^{k+1} = F \alpha^k.
\]
We can show that $F$ is a contraction mapping, and thus QMDP converges to the unique fixed point of $F$.
\begin{lemma}\label{lem:contraction-qmdp}
The QMDP operator $F$ is a $\gamma$-contraction mapping with respect to $\| \cdot \|_\infty$. That is, for all $\alpha, \alpha' \in \R^{N}$, we have
\[
\| F \alpha - F \alpha' \|_\infty  \leq \gamma \| \alpha - \alpha' \|_\infty.
\]
\end{lemma}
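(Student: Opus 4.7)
The plan is to follow the standard contraction proof for Bellman-type operators, adapted to the simpler QMDP setting. Since the reward term $R(s,a)$ in \eqref{eq:qmdp_op} is independent of $\alpha$, it cancels when we take the difference $(F\alpha)(s,a) - (F\alpha')(s,a)$, leaving only the discounted expectation of the pointwise difference of the maxes. So the core of the argument reduces to bounding that expectation uniformly in $(s,a)$.

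First I would fix arbitrary $\alpha, \alpha' \in \mathbb{R}^N$ and arbitrary $(s,a) \in S \times A$, and write
\[
(F\alpha)(s,a) - (F\alpha')(s,a) = \gamma \sum_{s' \in S} T(s,a,s') \left[ \max_{a' \in A} \alpha(s', a') - \max_{a' \in A} \alpha'(s', a') \right].
\]
Next I would invoke the elementary inequality $\bigl|\max_i u_i - \max_i v_i\bigr| \leq \max_i |u_i - v_i|$, which holds for any finite indexing set; this is standard and follows by taking the argmax of whichever side is larger and comparing to the corresponding entry on the other side. Applying it pointwise in $s'$ yields
\[
\left| \max_{a' \in A} \alpha(s', a') - \max_{a' \in A} \alpha'(s', a') \right| \leq \max_{a' \in A} |\alpha(s', a') - \alpha'(s', a')| \leq \|\alpha - \alpha'\|_\infty.
\]

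Then I would combine the triangle inequality with the fact that $T(s,a,\cdot)$ is a probability distribution, so $\sum_{s'} T(s,a,s') = 1$ and $T(s,a,s') \geq 0$. This gives
\[
\left| (F\alpha)(s,a) - (F\alpha')(s,a) \right| \leq \gamma \sum_{s' \in S} T(s,a,s') \|\alpha - \alpha'\|_\infty = \gamma \|\alpha - \alpha'\|_\infty.
\]
Since the bound on the right-hand side does not depend on $(s,a)$, taking the maximum over $(s,a) \in S \times A$ on the left yields the claim.

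There is essentially no serious obstacle here; the only substantive step is the max-difference inequality, and everything else is linearity of expectation plus the fact that $T$ is a stochastic kernel. The proof is a direct analog of the classical contraction argument for the MDP Bellman optimality operator, with the QMDP-specific simplification that the observation $z$ does not enter.
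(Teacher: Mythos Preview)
Your proof is correct and follows essentially the same approach as the paper's own proof: fix $(s,a)$, cancel the reward term, apply the max-difference inequality $|\max_i u_i - \max_i v_i| \le \max_i |u_i - v_i|$ pointwise in $s'$, and use that $T(s,a,\cdot)$ is a probability distribution to absorb the sum. The paper's presentation is slightly more compressed but the argument is identical.
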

Its proof can be found in Appendix~\ref{app:lem1}.

We now rewrite the QMDP operator as
\begin{align}\label{eq:qmdp-phi-vector}
\begin{split}
& (F \alpha) ( s, a) =   R(s,a) + \gamma \sum_{s' \in S} T(s, a, s') \max_{\phi \in \mathcal{P}(A)}  \sum_{a'} \phi(a') \alpha (s', a'),
\end{split}
\end{align}
where the optimization variable $\phi$ represents a probability distribution over the action space, which we simply call the \emph{action distribution}. 
Adding the entropy $\mathcal{H} (\phi)$\footnote{The entropy of $\phi$ is defined as $\mathcal{H}(\phi) := -\sum_{a} \phi(a) \ln \phi(a)$.}
 of action distribution $\phi$ as a regularizer to the above maximization problem, we have
\begin{align}
\ent^*_{\tau}(\alpha(s, \cdot)) &:= \max_{\phi \in \mathcal{P}(A)}\left[\sum_{a}\phi(a)\alpha(s, a) + \tau \ent(\phi) \right] \label{eq:conjugate}\\
&= \tau \ln \left ( \sum_a \exp \frac{\alpha(s, a)}{\tau} \right ), \quad \alpha \in \mathbb{R}^{N},  \label{eq:conjugate2}
\end{align}
where $\tau$ is called a \emph{temperature parameter} that controls the level of the regularization.\footnote{In addition to the maximum entropy regularization, a variant of using the KL divergence is also applicable. Also, the FIB method may serve as a substitute for QMDP. 
For those variants, we refer the reader to Appendix~\ref{appx:reg-fib} and~\ref{appx:mellowmax}. Their empirical analyses can be found in Appendix~\ref{appx:exp} as well.} 
Note that $\mathcal{H}_\tau^*$ is the convex conjugate of $\tau\ent$.
Given $s$, let $\phi_s^\star$ be an optimal solution to \eqref{eq:conjugate}. 
Then, the \emph{maximum entropy policy} can be constructed as
 $\mu (\cdot | s) = \phi_s^\star$.
 While such a policy cannot be used due to partial observability, it provides a connection to  maximum entropy stochastic control in the fully observable setting~\cite{haarnoja2017reinforcement,geist2019theory,Kim2022}.

Using $\mathcal{H}_\tau^*$, we introduce the following modified version of the QMDP operator:
\begin{equation}\label{eq:qmdp-maxent}
\begin{split}
&(F_\tau \alpha) (s, a)  = R(s,a) + \gamma \sum_{s' \in S} T(s, a, s')  \tau \ln \left ( \sum_{a' \in A} \exp \frac{\alpha(s', a')}{\tau} \right ),
\end{split}
\end{equation}
which we call the \emph{soft QMDP} operator since \eqref{eq:conjugate} or \eqref{eq:conjugate2} can be interpreted as the soft maximum of its standard counterpart $\max_{a' \in A} \alpha (s', a')$
in~\eqref{eq:qmdp_op}.
This type of  entropy regularization is known to preserve the contraction property of the operator~\cite{geist2019theory}. 
The corresponding soft QMDP algorithm is given by
$\alpha^{k+1} = F_\tau \alpha^k$ that converges to the unique fixed point of $F_\tau$.

The difference between the soft QMDP solution and the original one can be bounded using the classical analysis for error bounds in approximate DP~\cite{bertsekas1996neuro}.
\begin{proposition}\label{pro:alpha-bound}
Let $\alpha^\star$ and $\alpha^\star_\tau$ denote 
 the fixed points of $F$ and $F_\tau$, respectively.
Then, $\alpha^\star(s, a) \leq \alpha^\star_\tau (s, a)$ for all $(s, a) \in S \times A$, and the difference between these fixed points is bounded as follows:
\begin{equation}
\|\alpha^\star - \alpha^\star_\tau\|_\infty \leq \frac{\gamma\tau\ln|A|}{1-\gamma}.
\end{equation}
\end{proposition}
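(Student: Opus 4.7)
The plan is to exploit two well-known facts about the log-sum-exp (soft-max) operator: for any real vector $(x_1,\dots,x_n)$,
\begin{equation*}
\max_i x_i \;\le\; \tau \ln \sum_{i=1}^n \exp(x_i/\tau) \;\le\; \max_i x_i + \tau \ln n .
\end{equation*}
Applied pointwise inside \eqref{eq:qmdp-maxent}, this immediately yields a sandwich relation between the two operators:
\begin{equation*}
F\alpha \;\le\; F_\tau \alpha \;\le\; F\alpha + \gamma \tau \ln|A|\cdot\mathbf{1},
\end{equation*}
where the inequalities are understood componentwise and $\mathbf{1}$ is the all-ones vector in $\mathbb{R}^N$. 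I would establish this relation first, since both claims of the proposition flow from it.

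For the first claim $\alpha^\star \le \alpha^\star_\tau$, I would argue by monotonicity. Both $F$ and $F_\tau$ are monotone (they are sums, products with nonnegative transition probabilities, and max/soft-max of the argument, all of which preserve the componentwise order). Starting from $\alpha^\star = F\alpha^\star \le F_\tau \alpha^\star$ and iterating $F_\tau$, monotonicity gives $\alpha^\star \le F_\tau^k \alpha^\star$ for every $k$. Taking $k\to\infty$ and using the fact (analogous to Lemma~\ref{lem:contraction-qmdp}) that $F_\tau$ is a $\gamma$-contraction so its iterates converge to $\alpha^\star_\tau$, I obtain $\alpha^\star \le \alpha^\star_\tau$.

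For the norm bound, I would add and subtract $F_\tau \alpha^\star$ and use the triangle inequality:
\begin{equation*}
\|\alpha^\star_\tau - \alpha^\star\|_\infty
= \|F_\tau \alpha^\star_\tau - F \alpha^\star\|_\infty
\le \|F_\tau \alpha^\star_\tau - F_\tau \alpha^\star\|_\infty + \|F_\tau \alpha^\star - F \alpha^\star\|_\infty.
\end{equation*}
The first term is bounded by $\gamma\|\alpha^\star_\tau - \alpha^\star\|_\infty$ because $F_\tau$ is a $\gamma$-contraction. The second term is bounded by $\gamma\tau\ln|A|$ from the sandwich relation above. Rearranging $(1-\gamma)\|\alpha^\star_\tau - \alpha^\star\|_\infty \le \gamma \tau \ln|A|$ gives the stated bound.

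The only potentially delicate step is confirming that $F_\tau$ is indeed a $\gamma$-contraction in $\|\cdot\|_\infty$; the text cites \cite{geist2019theory} for this, but if a self-contained argument is desired I would mirror the proof of Lemma~\ref{lem:contraction-qmdp}, noting that $\tau \ln \sum_{a'}\exp(\cdot/\tau)$ is $1$-Lipschitz in $\|\cdot\|_\infty$ (its gradient is a probability vector, hence has $\ell_1$-norm equal to one). Everything else is bookkeeping with the soft-max sandwich and contraction arguments.
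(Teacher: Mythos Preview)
Your proposal is correct. The sandwich inequality for the log-sum-exp and the monotonicity argument for $\alpha^\star \le \alpha^\star_\tau$ are exactly what the paper uses. The difference lies in how the norm bound is obtained. The paper proceeds inductively, showing
\[
(F_\tau)^n \alpha(s,a) \;\le\; F^n \alpha(s,a) + (\gamma + \gamma^2 + \cdots + \gamma^n)\,\tau\ln|A|
\]
for all $n$ (using monotonicity and the constant-shift property of $F_\tau$), and then lets $n\to\infty$ so that both sides converge to their respective fixed points. Your argument instead works directly at the fixed points: you write $\|\alpha^\star_\tau - \alpha^\star\|_\infty = \|F_\tau\alpha^\star_\tau - F\alpha^\star\|_\infty$, split via $F_\tau\alpha^\star$, and use the $\gamma$-contraction of $F_\tau$ to absorb one term into the left-hand side. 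This is shorter and avoids the inductive limit argument altogether; the price is that it relies on the contraction property of $F_\tau$ as a black box, whereas the paper's iterative argument makes the accumulation of the $\gamma^k\tau\ln|A|$ terms explicit. Both are standard approximate-DP techniques and either would be perfectly acceptable here.
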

Its proof can be found in Appendix~\ref{app:prop}.
Recall that the approximate value function $\bar{V}$ obtained by QMDP is given by \eqref{eq:val}. Similarly, the approximate value function $\bar{V}_\tau$ for soft QMDP can be computed as $\bar{V}_\tau(b) = \max_{a \in A} b^\top \alpha_{\tau, a}^{\star}$. 
It directly follows from Proposition~\ref{pro:alpha-bound} that the gap between $\bar{V}$ and $\bar{V}_\tau$ admits the same bound, i.e., 
$\bar{V}(b) \leq \bar{V}_\tau(b) \leq \bar{V}(b) + \frac{\gamma\tau \ln|A|}{1-\gamma}$ for all $b \in B$.

\subsection{Why Does Maximum Entropy Regularization Help Anderson Acceleration?}\label{subsec:reg-intuition}

We now provide an informal but intuitive explanation of why applying the proposed entropy regularization is beneficial for AA.
To do so, we first illustrate the connection between AA and the quasi-Newton method,
akin to the previous discussions in~\cite{fang2009two,wei2021stochastic,grand2021convex}.\footnote{The relationship between AA and the quasi-Newton method is utilized in~\cite{brezinski2020shanks} in proving the
local linear convergence of AA algorithms. 
Furthermore, another work~\cite{wei2021stochastic}  exploits this connection to apply AA for stochastic gradient descent.
In addition, the analysis of AA or FPI from the optimization viewpoint is an active topic of research which has been recently employed to study accelerated variants of the value iteration~\cite{goyal2022first}.}
Suppose there exists a function $H:\mathbb{R}^N \to \mathbb{R}^N$ that satisfies
\[
\nabla H(x) = G(x) = x - F(x).
\]
Then, the fixed point $x^\star$ of $F$ is a critical point of $H$ as it satisfies $\nabla H(x^\star) = 0$,
and the single FPI with $F$ corresponds to a one-step gradient descent on $H$ with stepsize 1, i.e.,
\[
x^{k+1} = F\left(x^k \right) = x^k - \nabla H\left(x^k \right).
\]
Extending this point of view, AA can be interpreted as the quasi-Newton method for minimizing $H$.
To see this, we assume that $H$ is twice continuously differentiable and the estimates in the memory 
$(x^{k-1}, \ldots, x^{k-M^k})$ are in the small neighborhood $U_{x^k}$ of the current estimate $x^k$. Since $\nabla^2 H(x)$ is close to $\nabla^2 H \left(x^k \right)$ within $U_{x^k}$, Taylor's theorem can be used to obtain
\begin{align*}
\nabla^2 H\left(x^k \right) s^\ell &= \nabla^2 H\left(x^k \right) \left(x^{\ell + 1} - x^{\ell} \right) \\
 &\approx \nabla H\left(x^{\ell+1}\right) - \nabla H\left(x^{\ell}\right) = g^{\ell + 1} - g^\ell = y^\ell
\end{align*}
for all $k - M^k \leq \ell < k$. This implies $Y_k \approx  H\left(x^k \right) S_k$, which motivates the following approximate version of the original optimization problem~\eqref{eq:aa-problem}:
\begin{equation*}\label{eq:approx_qp}
\min_{\xi^k}\left\| \nabla H\left(x^k \right) - \nabla^2 H\left(x^k \right) S_k \xi^k \right\|^2_2.
\end{equation*}
Therefore, when $\xi^k$ is a solution of \eqref{eq:aa-problem} and $\nabla^2 H \left(x^k \right)$ is  invertible, $\nabla^2 H\left(x^k \right)^{-1} \nabla H\left(x^k \right)$ is a reasonable estimate of $S_k \xi^k$.
Finally, we arrive at the following approximation that provides a concrete interpretation of AA as the quasi-Newton method:
\begin{equation*}
\begin{split}
x^{k+1} &= \sum^{M^k}_{i=0} w^k_i x^{k-M^k+i} - \sum^{M^k}_{i=0} w^k_i g^{k-M^k+i} \\
&= \left(x^k - S_k\xi^k \right) - \left(g^k - Y_k\xi^k \right) \\
&\approx x^k - \nabla^2 H\left(x^k \right)^{-1} \nabla H\left(x^k\right),
\end{split}
\end{equation*}
where the term $g^k - Y_k \xi^k$ is neglected in the last equality as $\xi^k$ is the optimal solution of \eqref{eq:aa-problem}.

To see why our maximum entropy regularization of POMDPs speeds up the convergence of AA, 
we point out that increasing $\tau$ promotes randomness caused by the action distribution $\phi$ in~\eqref{eq:conjugate}.
Whenever the temperature parameter $\tau$ is large, the maximizer $\phi^\star$ of~\eqref{eq:conjugate} becomes independent of $\alpha$, and thus $\phi^\star$ tends to the uniform distribution over $A$. The resulting $F_\tau$ is nearly constant on some bounded domain. 
The same conclusion is deduced from  the following inequality which is a direct consequence of Jensen's inequality:
\begin{equation}\label{eq:entropy-ineq}
\frac{1}{|A|}\sum_{a} \alpha(s, a) + \tau \ln|A| 
\leq \mathcal{H}^\star_\tau(\alpha(s, \cdot)) \leq \max_a \alpha(s, a) + \tau \ln |A|.
\end{equation}
In  words, the asymptotic behavior of $\mathcal{H}^\star_\tau$ with respect to $\tau$ is dominated linearly by $\tau$, while the effect of $\alpha$ becomes negligible.
Therefore, on any compact subset of $\mathbb{R}^{N}$, $F_\tau$ can be regarded as a constant function if $\tau$ is sufficiently large.
In particular,
$\nabla H(\alpha) = \alpha - F_\tau (\alpha)$ becomes close to an affine function of $\alpha$, and $H$ is well-approximated by a quadratic function of $\alpha$. Therefore, $\nabla^2 H$ has a very small Lipschitz constant $L$ (at least locally). 
Since Newton's method enters the quadratically convergent stage within a small number of iterations when $L$ is sufficiently small, AA is expected to converge quickly when a sufficiently large $\tau$ is used.
This motivates us to use the proposed maximum entropy variant of POMDPs or, more specifically, soft QMDP in designing an efficient AA algorithm for solving POMDPs.  
A rigorous convergence analysis of the proposed method is provided in Section~\ref{sec:conv}.

\section{Anderson Acceleration for Approximate POMDP Solutions}\label{sec:aa-pomdp}

Probably the most straightforward way to design an AA algorithm for solving POMDPs is to employ the AA update rule~\eqref{eq:aa_update}  for computing the fixed point of the QMDP operator. 
However, this naive approach has a critical issue. 
As mentioned in Section~\ref{sec:aa}, the standard AA is often subject to numerical instability as well as a lack of global convergence guarantees except for the restricted class of fixed-point operators such as linear contraction operators~\cite{toth2015convergence}. 
To resolve this issue, we use a stabilized variant of AA proposed in \cite{fu2019anderson} together with a novel safeguarding mechanism.

\subsection{Stabilized AA for Soft QMDP}

We first introduce the stabilized variant of AA~\cite{scieur2016regularized}, which we call the stabilized AA (SAA), with application to the soft QMDP operator $F_\tau$. 
The residual function is given by
$G(\alpha) := \alpha - F_\tau \alpha$.
Then, $g^k, y^k$, and $s^k$ are defined in a similar manner as~\eqref{eq:gys} and $Y_k$ and $S_k$ are defined as \eqref{eq:ys}.
In the $k$th step, the $\alpha$-vector is updated as
\begin{equation}\label{eq:aapomdp-update}
\alpha^{k+1} := \sum_{i=0}^{M^k} w_i^k F_\tau \alpha^{k- M^k + i},
\end{equation}
where the weights $w^k$ are obtained by solving the following QP problem:
\begin{equation}\label{eq:raa-problem}
\min_{\xi} \; \| g^k - Y_k  \xi \\|_2^2 + 
\eta_k \| \xi^k \|_2^2,
\end{equation}
where $\eta_k = \eta ( \| S_k \|_F^2 + \|Y_k \|_F^2 )$ is a regularization coefficient with a hyperparameter $\eta > 0$.
 This is a Tikhonov regularized variant of~\eqref{eq:aa-problem} proposed in~\cite{fu2019anderson} and adopted in recent works~\cite{ermis2021on,sun2021damped}. While this type of adaptive regularization resolves the numerical instability of the original problem for AA~\eqref{eq:aa-problem}, the effect of regularization vanishes as the iterates get close to the fixed point and the size of $S_k$ and $Y_k$ tend to be small. Note that the analytic solution of~\eqref{eq:raa-problem}, denoted by $\xi^k$, is readily obtained as
\begin{equation}\label{eq:sol-xi}
\xi^k = \left(Y_k^\top Y_k + \eta_k I \right)^{-1} Y_k^\top g^k,
\end{equation}
where the extra term $\eta_k I$ prevents 
$\left(Y_k^\top Y_k + \eta_k I \right)$ from being near-singular. Furthermore, both unregularized AA and FPI can be regarded as extreme cases of~\eqref{eq:raa-problem}. Indeed, the unregularized version of AA is obtained by taking $\eta \rightarrow 0$, while taking $\eta \to \infty$ leads to $\xi^k \to 0$ which corresponds to FPI.

Using the solution of~\eqref{eq:sol-xi}, the AA update rule~\eqref{eq:aapomdp-update} can be written in the following matrix form:
\begin{equation*}
\begin{split}
\alpha^{k+1} &= (\alpha^k - S_k\xi^k) - (g^k - Y_k\xi^k) \\
&= \alpha^k - \left (g^k + (S_k - Y_k)\xi^k \right ) = \alpha^k - A_k g^k,
\end{split}
\end{equation*}
where $A_k$ is given as
\[
A_k = -I - (S_k - Y_k)\left(Y_k^\top Y_k + \eta_k I \right)^{-1} Y_k^\top.
\]
Then, a simple argument (e.g. see~\cite{fu2019anderson}) shows that the size of $A_k$ is bounded  as follows:
\begin{equation}\label{eq:aa-mat-bound}
\|A_k\|_2 \leq 1 + 2\eta^{-1}.
\end{equation}
Therefore,  Tikhonov regularization prevents the estimates from fluctuating severely, thereby alleviating the instability issue of AA.

Given the solution $\xi^k$ of~\eqref{eq:raa-problem}, an \emph{acceleration factor} $\theta^k$ is defined as:
\begin{equation}\label{eq:opt-gain}
\theta^k := \frac{\|g^k_w\|_2}{\|g^k\|_2}, \quad g^k_w := g^k - Y_k  \xi^k,
\end{equation}
where $g^k_w$ is the weighted sum of the residuals with weight vector $w^k$ obtained from the optimal solution $\xi^k$ of~\eqref{eq:raa-problem}. This is because using~\eqref{eq:relation}, $g^k - Y_k \xi^k$ can be expressed as
\begin{align}\label{eq:aa-equiv}
\begin{split}
g^k - Y_k \xi^k &= g^k - \sum_{\ell = 0}^{M^k-1} y^{k - M^k+ \ell} \xi^k_{\ell} \\
&= g^k - \sum_{\ell = 0}^{M^k-1} (g^{k - M^k+\ell+1} - g^{k-M^k +\ell}) \xi^k_{\ell} \\
&= (1 - \xi^k_{M^k-1})g^k + \sum_{i=1}^{M^k-1}(\xi^k_{i-1} - \xi^k_{i}) g^{k - M^k+i} + \xi^k_0 g^{k-M^k} \\
&= w^k_{M^k} g^k + \sum_{i=1}^{M^k-1} w^k_i g^{k-M^k+i}  + w^k_0 g^{k-M^k} \\
&= \sum_{i=0}^{M^k} w^k_i g^{k-M^k + i},
\end{split}
\end{align}
where the square norm of the final expression is taken as the objective function of~\eqref{eq:aa_problem}. Note that taking $\xi = \bm{0}$ results in $\|g^k_w\|_2 \leq \|g^k - Y_k \bm{0}\|_2 = \|g^k\|_2$, which implies that $\theta^k \leq 1$. The acceleration factor, introduced in~\cite{evans2020proof}, has been used to describe the decay rate of the residuals in AA steps, thus quantifing the advantage of AA over FPI~\cite{pollock2021anderson,sun2021damped,ouyang2020nonmonotone}.

\subsection{Algorithm}\label{subsec:safeguarding}

The notion of the target acceleration factor motivates a simple yet effective strategy to improve the performance of SAA. Roughly speaking, this strategy directly compares the  candidate obtained by SAA and the standard FPI candidate and chooses the better one by employing the technique known as \emph{safeguarding}~\cite{fu2019anderson,ouyang2020nonmonotone}. Specifically, we introduce the notion of a \emph{target acceleration factor} and accordingly design a novel safeguarding mechanism.
Our new algorithm, which combines the new and conventional safeguarding rules, converges to the fixed point of the soft QMDP operator $\alpha_\tau^\star$.

\subsubsection{Conventional Safeguarding}
 
 The conventional safeguarding mechanism checks the residual norm $\|g^k\|_\infty$ at each iteration and decides when to reject the AA candidate and switch to the FPI candidate.
For instance, in~\cite{fu2019anderson,zhang2021stabilizing}, safeguarding at the $k$th step checks if
\begin{equation}\label{eq:loose-safeguard}
\|g^k\|_\infty > C(\naa(k) + 1)^{-(1+\phi)},
\end{equation}
where $\naa(k)$ denotes the number of AA steps invoked until the $k$th step, and $C$ and $\phi$ are positive constants. We refer to the right-hand side of~\eqref{eq:loose-safeguard} as the \emph{target residual}.
The condition~\eqref{eq:loose-safeguard} means that the AA candidate is rejected if the residual norm $\|g^k \|_\infty$ is greater than the target residual.

However, it is worth noting that the safeguarding criterion~\eqref{eq:loose-safeguard} rarely rejects AA candidates, as the term $(\naa(k) + 1)^{-(1+\phi)}$ vanishes with a polynomial rate as $\naa(k) \to \infty$. Therefore, for a sufficiently large $k$, AA candidates are rarely rejected. 
Some recent literature including~\cite{henderson2019damped} has introduced more strict conditions, namely  using
$\gamma\|g^{k-1}\|_\infty$ as the target residual. However, this is often too conservative to fully exploit the advantage of AA.

\subsubsection{Target Acceleration Factor and New Safeguarding}

We now 
introduce a new notion called the \emph{target acceleration factor} and
propose a novel safeguarding mechanism.
The target acceleration factor is devised to  adaptively switch between the AA and FPI candidates in each iteration.
Specifically, we define the \emph{target acceleration factor} as follows:
\begin{equation}\label{eq:targ-gain}
\theta_\mathrm{targ} := \bar{m} - m \|g^k_w\|^2_2,
\end{equation}
where $m> 0$ and $0 < \bar{m} \leq 1$ are predefined hyperparameters.

Our new safeguarding mechanism checks if $\theta^k  \geq \theta_\mathrm{targ}$ in which case  the AA candidate is rejected in favor of the FPI candidate. 
Note that the rejection frequently occurs 
 when $k$ is small: during the early stages of the algorithm, $\Vert g^k_w \Vert_2$ tends to be large, which leads to a small value of $\theta_{\mathrm{targ}}$. 
Employing the target acceleration factor avoids using the  solution of~\eqref{eq:raa-problem} in early stages, where the solution quality is usually poor due to the potentially large distance between the initial estimate and the fixed point. 
 Furthermore, the choice of $g^k_w$ instead of $g^k$ in~\eqref{eq:targ-gain} prevents the condition from being overly conservative since $\Vert g^k_w\Vert_2 \ll \Vert g^k \Vert_2$ in practice. 

Unlike the target residual that only considers the number of AA iterations taken so far, the target acceleration factor~\eqref{eq:targ-gain} is adaptive in that it actively exploits the residuals of the recent iterates $g^k, \ldots, g^{k-M^k}$. Therefore, it is not surprising that the use of the target acceleration factor provides better chances of rejecting bad AA candidates compared to the target residual.
Unfortunately,
the corresponding safeguarding rule does not directly guarantee the convergence of the sequence $\{\alpha^k\}$.
As a remedy, we combine the new safeguarding mechanism with the conventional one~\eqref{eq:loose-safeguard}.
We refer to this combination as the \emph{double safeguarding}.

\begin{algorithm}[t!]
\caption{AA-sQMDP}\label{alg:aa-qmdp}
\begin{algorithmic}[1]
	\State \textbf{Input}: Initial vector $\alpha^{0}$, soft QMDP operator $F_\tau$, memory size $\mmax$, regularization parameter $\eta$, 
	safeguarding constants $D > 0$, $\phi > 0$, $N_s \geq 1$, $m > 0$, $\bar{m} > 0$;
	\State Initialize $I :=$ True, $\naa := 0$, $\raa := 0$;
	\State Compute  $\alpha^{1} := F \alpha^{0}$, $g^{0} := G (x^0)$;
	\For{$k = 1,2, \dots $}
	\State Set $M^k := \min\{M, k\}$;
	\State Compute the FPI candidate: $\alpha_{\text{FPI}} := F_\tau \alpha^{k}$;
	\State Compute the residual: $g^{k} := G (\alpha^k)$;
	\State \text{\# \bf Stabilized Anderson acceleration}
	\State Compute $w^k$ from~\eqref{eq:sol-xi} and~\eqref{eq:relation}
	\State Compute AA candidate $\alpha_{\text{AA}}$ from~\eqref{eq:aapomdp-update};
	\State Compute $\theta^k$ from $w^k$ and $g^k$;
	\State \text{\# \bf Double safeguarding}
	\If{$\theta^k > \Bar{m} - m \|g^k_w\|^\kappa_2$}
	\State $\alpha^{k+1} := \alpha_{\text{FPI}}$, $\raa := 0$;
	\ElsIf{$I = \text{True}$ or $\raa \geq N_s$}
	\If{$\|g^k\|_\infty \leq D\|g^{0}\|_\infty(\frac{\naa}{N_s}+1)^{-(1+\phi)}$}
	\State $\alpha^{k+1} := \alpha_{\text{AA}}$;
	\State $\naa := \naa+1$, $\raa := 1$, $I := \text{False}$;
	\Else
	\State $\alpha^{k+1} := \alpha_{\text{FPI}}$, $\raa := 0$;
	\EndIf
	\Else
	\State $\alpha^{k+1} := \alpha_{\text{AA}}$;
	\State $\naa := \naa+1$, $\raa := \raa + 1$; 
	\EndIf
	\State Terminate if the stopping criterion is satisfied;
	\EndFor
	\State \textbf{return} $\alpha^{k+1}$;
\end{algorithmic}
\end{algorithm}

\subsubsection{AA-sQMDP Algorithm and Its Convergence}\label{sec:conv}

Evaluating the acceleration factor of SAA enables us to adaptively compare the quality of the AA candidate with that of the FPI candidate.
This motivates the AA algorithm for POMDPs, which we call \emph{AA-sQMDP}, 
 equipped with the double safeguarding mechanism.
The resulting algorithm  is described in Algorithm~\ref{alg:aa-qmdp}.
The algorithm starts with $\alpha^0$, with each element randomly chosen from $[{r_{\min}}/{(1-\gamma)},{r_{\max}}/{(1-\gamma)}]$. 
At each step, both of the FPI  and AA candidates are computed (lines 6--10). 
Then, our double safeguarding criterion  is used to decide whether to select the AA candidate instead of the FPI candidate (lines 12--25). Specifically, the algorithm first examines the acceleration factor $\theta^k$ and rejects the AA candidate if $\theta^k$ is greater than the target acceleration factor (line 13). Otherwise, the algorithm continues to check whether the residual norm $\| g^k \|_\infty$ is sufficiently small compared to the target residual (line 16). The target residual test  is omitted for $N_s$ steps after accepting the AA candidate.

The convergence of AA-sQMDP to the fixed point $\alpha^\star$ of the soft QMDP operator is shown as follows. 

\begin{theorem}\label{thm:global-conv}
Let $\{\alpha^k\}$ be the sequence generated by Algorithm~\ref{alg:aa-qmdp}.
Then, $\{\alpha^k\}$ converges to the fixed point $\alpha^\star_\tau$ of the soft QMDP operator $F_\tau$, i.e.,
 \[
 \lim_{k\rightarrow\infty} \alpha^k = \alpha^\star_\tau.
 \]
\end{theorem}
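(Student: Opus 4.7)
The plan is to follow the standard two-case dichotomy for safeguarded AA methods, resting on the $\gamma$-contraction property of the soft QMDP operator $F_\tau$ in $\|\cdot\|_\infty$. This contraction is a straightforward adaptation of Lemma~\ref{lem:contraction-qmdp} using the $1$-Lipschitz property of the log-sum-exp function (as in \cite{geist2019theory}), so $F_\tau$ admits a unique fixed point $\alpha^\star_\tau$.

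I would partition the iterations by whether the AA candidate is accepted only finitely or infinitely often. In the finite case there exists an index $K$ beyond which every step is the FPI update $\alpha^{k+1} = F_\tau \alpha^k$, and the Banach fixed-point theorem immediately yields $\alpha^k \to \alpha^\star_\tau$.

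The substantive case is when infinitely many AA candidates are accepted. The counter $\raa$ resets after every FPI step (via the target acceleration rejection in line~13 or the loose safeguard failure in line~19) and increments on every AA acceptance, so any streak of at least $N_s$ consecutive AA acceptances forces the loose safeguard check in line~16 to be invoked. I would argue that the loose safeguard must then pass infinitely often; along that subsequence $\{k_j\}$,
\[
\|g^{k_j}\|_\infty \leq D \|g^0\|_\infty \Bigl(\tfrac{\naa(k_j)}{N_s} + 1\Bigr)^{-(1+\phi)} \xrightarrow{j\to\infty} 0,
\]
and combined with the contraction identity $\|\alpha^{k_j} - \alpha^\star_\tau\|_\infty \leq \|g^{k_j}\|_\infty / (1-\gamma)$ this gives $\alpha^{k_j} \to \alpha^\star_\tau$. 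To promote the subsequential convergence to convergence of the whole sequence, I would bound the deviation $\|\alpha^k - \alpha^\star_\tau\|_\infty$ on each interval $[k_j, k_{j+1}]$ by a constant multiple of $\|\alpha^{k_j} - \alpha^\star_\tau\|_\infty$, using $\gamma$-contraction for the interleaved FPI substeps and the uniform operator bound $\|A_k\|_2 \leq 1 + 2\eta^{-1}$ from~\eqref{eq:aa-mat-bound} for the at most $N_s-1$ uninspected AA substeps.

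The hardest part is reconciling the two safeguarding criteria, in particular ensuring that the loose safeguard is actually verified infinitely often despite the possibility that the target acceleration rejection in line~13 might reset $\raa$ before it reaches $N_s$. If that scenario occurs, one must instead combine the strict contraction of the infinitely many FPI substeps it induces with the bounded per-step AA amplification $\|g^{k+1}\|_\infty \leq [(1+\gamma)(1+2\eta^{-1})+1]\|g^k\|_\infty$ to drive the residual to zero, likely exploiting the target acceleration condition $\theta^k \leq \bar{m} - m\|g^k_w\|_2^\kappa$ that is enforced whenever AA is accepted. The careful interplay between these two safeguarding mechanisms is the main technical burden of the proof.
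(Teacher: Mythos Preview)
Your plan is essentially the paper's proof: the same finite/infinite dichotomy on AA acceptances, the same conversion $\|\alpha^{k}-\alpha^\star_\tau\|_\infty\le\|g^{k}\|_\infty/(1-\gamma)$ along the safeguarded subsequence, and the same extension to the full sequence by bounding each AA step via $\|A_k\|_2\le 1+2\eta^{-1}$ over at most $N_s$ consecutive unchecked AA updates, together with $\gamma$-contraction on the interleaved FPI steps. The paper carries out exactly these estimates (its constants $B=1+(1+\gamma)\sqrt{N}(1+2/\eta)$ and $C=\sqrt{N}(1+2/\eta)$ are precisely your ``bounded per-step AA amplification''), so the core of your sketch matches.

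The one point of divergence is your final paragraph. The paper does \emph{not} treat the interaction of the two safeguards as a difficulty: it defines $k_i$ as the first index of each AA run (splitting a run artificially at length $N_s$ via a footnote convention so that $k_{i+1}-k_i\le N_s$), asserts flatly that ``for every $i$, the residual $g^{k_i}$ passes through the conventional safeguarding (line~16),'' and from this reads off $\|g^{k_i}\|_\infty\le D\|g^0\|_\infty(i+1)^{-(1+\phi)}\to 0$. The scenario you flag---a line-13 rejection resetting $\raa$ so that the next AA run begins via the Else branch (line~22) without revisiting line~16---is not discussed; the paper proceeds as though each AA run opens with a passed line-16 check. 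So what you call ``the main technical burden'' lies outside the paper's own argument, and your proposed patch (balancing infinitely many FPI contractions against bounded AA amplifications, possibly invoking the $\theta_{\mathrm{targ}}$ inequality) already goes beyond what the published proof supplies. If you adopt the paper's reading of the algorithm, your sketch is complete and coincides with the paper; if you insist on covering the scenario you raise, note that a naive product $\gamma\cdot B^{N_s}$ need not be less than one, so the $\theta_{\mathrm{targ}}$ condition would indeed have to do real work.
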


 \begin{proof}
Suppose first that AA candidates are selected for finitely many times. In this case, it is straightforward to show the convergence of $\alpha^k$ to $\alpha^\star_\tau$; There exists $K \geq 0$ such that FPI is used exclusively for all $k \geq K$. Then, we have
\begin{equation*}
\lim_{k\rightarrow\infty} \alpha^k = \lim_{k\rightarrow\infty}F_\tau^{k - K}\alpha^{K} = \alpha^\star_\tau.
\end{equation*}

We now assume that AA candidates are selected for infinitely many times. Let $k_i$ denote the initial iteration count for accepting an AA candidate.\footnote{If there exists $i$ such that AA is selected for all $k \in [k_i, k_i+N_s)$, then we simply set $k_{i+1} = k_i + N_s$ so that $k_i$ is defined for all $i \geq 0$.} Then, there exists an integer $\ell_i \in (k_i, k_{i+1}]$  such that AA is used over $[k_i, \ell_i)$ and FPI is used over $[\ell_i, k_{i+1})$.\footnote{{When $\ell_i = k_{i+1}$, FPI is never used over $[k_i, k_{i+1})$.}} Since $\alpha^\star_\tau$ is the fixed point of $F_\tau$, we have
\begin{align*}
\| \alpha^{k_i} - \alpha^\star_\tau \|_\infty &\leq \| g^{k_i} \|_\infty +\| F_\tau\alpha^{k_i} - \alpha^\star\|_\infty \\
&\leq \| g^{k_i} \|_\infty + \gamma \| \alpha^{k_i} - \alpha^\star \|_\infty.
\end{align*}
For every $i$, the residual $g^{k_i}$ passes through the conventional safeguarding (line 16 of Algorithm~\ref{alg:aa-qmdp}). Thus, we have
\begin{align*}
 \| \alpha^{k_i} - \alpha^\star \|_\infty &\leq \frac{1}{1-\gamma} \| g^{k_i} \|_\infty \\
&\leq \frac{D\| g^0\|_\infty}{1 - \gamma} (i + 1)^{-(1 + \phi)} \xrightarrow{i \to \infty} 0.
\end{align*}
That is, $\alpha^{k_i}$ converges to $\alpha^\star_\tau$ as $i \to \infty$.

We now claim that for general $k$, the size of $\| \alpha^k - \alpha^\star \|_\infty$ is controlled appropriately so that the convergence of $\alpha^k$ to $\alpha^\star_\tau$ is achieved. To show this, suppose that an integer $k \geq k_0$ is arbitrarily chosen. Then, there exists $i \geq 0$ such that $k_i \leq k < k_{i+1}$. If $k$ lies in $[k_i, \ell_i)$, then for all $k_i \leq m \leq k$, AA is used to generate $\alpha^{m+1}$. Hence, denoting $N = |S||A|$, the following inequality holds:
\begin{align*}
\|g^{m+1}\|_\infty &\leq \|\alpha^{m+1} - F\alpha^{m+1} -\alpha^m + F\alpha^m \|_\infty +\|g^m \|_\infty \\
&\leq (1+\gamma) \|\alpha^{m+1} - \alpha^m \|_\infty + \|g^m \|_\infty \\
&= (1+\gamma)\|A_m g^m \|_\infty + \|g^m \|_\infty\\
&\leq (1+\gamma)\sqrt{N} \underbrace{\|A_m \|_2}_{\leq 1 + 2/\eta} \| g^m\|_\infty + \|g^m \|_\infty \\
&\leq \underbrace{(1+(1+\gamma)\sqrt{N}(1+2/\eta))}_{:= B}\| g^m\|_\infty \\
&= B \| g^m\|_\infty,
\end{align*}
where we use the bound~\eqref{eq:aa-mat-bound} to obtain the last inequality. Applying the above inequality for all $k_i \leq m \leq k$, we obtain
\begin{equation}\label{eq:aa-residual-bound}
\|g^k \|_\infty \leq B^{k - k_i}\|g^{k_i} \|_\infty, \quad k_i \leq k < \ell_i.
\end{equation}
For all $k_i \leq m \leq k$, we also have
\begin{align*}
\|a^{m+1} - \alpha^\star_\tau \|_\infty &= \| \alpha^m - A_m g^m - \alpha^\star_\tau \|_\infty \\
&\leq \| \alpha^m - \alpha^\star_\tau \|_\infty + \| A_m g^m \|_\infty \\
&\leq \| \alpha^m -\alpha^\star_\tau \|_\infty + \| A_m g^m \|_2 \\
&\leq \| \alpha^m -\alpha^\star_\tau \|_\infty + \sqrt{N}\| A_m \|_2 \| g^m \|_\infty \\
&\leq \| \alpha^m -\alpha^\star_\tau \|_\infty + \underbrace{\sqrt{N}(1+ 2/\eta)}_{:= C} \| g^m \|_\infty \\
&\leq \| \alpha^m -\alpha^\star_\tau \|_\infty + C \| g^m \|_\infty.
\end{align*}
Combining this inequality with~\eqref{eq:aa-residual-bound} yields that for all $k_i \leq k < \ell_i$,
\begin{align*}
\|\alpha^{k+1} - \alpha^\star_\tau \|_\infty &\leq  \| \alpha^{k_i} - \alpha^\star_\tau \|_\infty + C\sum_{j=k^{i}}^{k} \| g^j \|_\infty \\
&\leq \| \alpha^{k_i} - \alpha^\star_\tau \|_\infty + E^k_i \| g^{k_i}\|_\infty,
\end{align*}
where $E_i^k$ is a constant, defined as
\begin{equation*}
E^k_i := C\sum_{j=0}^{k - k_i} B^{j}.
\end{equation*}
In particular, by letting $k = \ell_{i}-1$, we arrive at
\begin{equation*}
\|\alpha^{\ell_i} - \alpha^\star_\tau \|_\infty \leq  \| \alpha^{k_i} - \alpha^\star_\tau \|_\infty + E^{\ell_i-1}_i \| g^{k_i}\|_\infty.
\end{equation*}
Recall that FPI is used over $[\ell_i, k_{i+1})$.
Thus, for any integer $k \in [\ell_i, k_{i+1})$, the following inequalities hold:
\begin{align*}
\| \alpha^k - \alpha^\star_\tau\|_\infty &\leq \gamma^{k - \ell_i} \| \alpha^{\ell_i} - \alpha^\star_\tau\|_\infty \\
&\leq \gamma^{k - \ell_i} \left( \| \alpha^{k_i} - \alpha^\star_\tau \|_\infty + E^{\ell_i-1}_i \|g^{k_i} \|_\infty \right) \\
&\leq \| \alpha^{k_i} - \alpha^\star_\tau \|_\infty + E^{\ell_i-1}_i \|g^{k_i} \|_\infty.
\end{align*}
However, the set $\{ E_i^k \}_{i \geq 0, k_i \leq k < \ell_i}$ is bounded from above by some constant $E$. Indeed, because $\{ k_i\}$ and $\{\ell_i \}$ are constructed to satisfy $ \ell_i - k_i \leq k_{i+1} - k_i \leq N_s$ for all $i$, we have
\begin{equation*}
E_i^k  \leq E := C\sum_{j=0}^{N_s} B^{j}, \quad i \geq 0, \quad k_i \leq k < \ell_i.
\end{equation*}
Therefore, we have
\begin{equation*}
\| \alpha^{k} - \alpha^\star_\tau\|_\infty \leq \| \alpha^{k_i} - \alpha^\star_\tau \|_\infty + E \| g^{k_i} \|_\infty, \quad k_i \leq k < k_{i+1}.
\end{equation*}
Letting $k\to\infty$, the right-hand side  goes to $0$ as both $\| \alpha^{k_i} - \alpha^\star_\tau \|_\infty$ and $ \| g^{k_i} \|_\infty$ converge to $0$.
Therefore, we conclude that
$\lim_{k\to\infty} \|\alpha^k - \alpha^\star_\tau \|_\infty = 0$ as desired.
{}
\end{proof}

After obtaining $\alpha_\tau^\star$ from Algorithm~\ref{alg:aa-qmdp}, the soft QMDP policy can be constructed as in~\eqref{eq:state-space-policy}.

\section{Simulation Method}\label{sec:sim}

So far, we have assumed that the POMDP model is available. 
Specifically, 
Algorithm~\ref{alg:aa-qmdp} requires  information about $T,O$ and $R$.
When the model information is unavailable, 
simulation-based methods can be employed to compute an approximate solution 
using a POMDP simulator (e.g.,~\cite{Silver2010,ye2017despot}).

We adopt the setting in~\cite{Silver2010}, where 
 a black-box simulator $\mathcal{G}$ is used as a \emph{generative} model of the POMDP.
 Given a state and action pair $(s, a)$, 
 the simulator generates a sample of successive state, observation and reward $\{(s'^{(j)},z^{(j)}, r^{(j)})\}_{j=1}^J$, that is, 
 $(s'^{(j)},z^{(j)}, r^{(j)}) \sim \mathcal{G} (s, a)$. 
 Using the sample, we consider the following 
empirical version of the soft QMDP operator~\eqref{eq:qmdp-maxent}:
 \begin{equation}\label{eq:qmdp-sim}
\begin{split}
&(\hat{F}_\tau \alpha) (s, a)  = \frac{1}{J} \bigg[ \sum^{J}_{j=1} r^{(j)} + \gamma \sum_{j=1}^J  \tau \ln \bigg ( \sum_{a' \in A} \exp \frac{\alpha(s'^{(j)}, a')}{\tau} \bigg ) \bigg].
\end{split}
\end{equation}
 The residual function for the operator $\hat{F}$ is  defined as
 \[
 \hat{G}_\tau (\alpha) := \alpha - \hat{F}_\tau \alpha.
 \]
 Algorithm~\ref{alg:aa-qmdp} is then accordingly modified by replacing $F_\tau$ and $G_\tau$ with $\hat{F}_\tau$ and $\hat{G}_\tau$, respectively. 
 However, the simulation-based method is not guaranteed to converge due to simulation errors. 
Instead, we analyze an error bound for the simulation-based method by examining how the simulation errors are propagated through the AA-QMDP algorithm.

For further analysis, we consider the following mismatch between $\hat{F} \alpha^k$ and $F \alpha^k$:
\[
e^k := \hat{F}_\tau \alpha^k - F_\tau \alpha^k,
\]
which can be interpreted as the error caused by simulation. 
Then, the FPI candidate in Algorithm~\ref{alg:aa-qmdp} can be expressed as
\[
\alpha^{k+1} = \hat{F}_\tau \alpha^k = F_\tau \alpha^k + e^k.
\]
We show that the residual $G(\alpha^k) = \alpha^k -  F\alpha^k$ is bounded whenever the size of the error $e^k$ is bounded.
\begin{theorem}\label{thm:sim}
Consider Algorithm~\ref{alg:aa-qmdp} with the simulation-based soft QMDP operator $\hat{F}_\tau$.
Suppose there exists a positive constant $\delta$ such that  $\|e^k\|_\infty \leq \varepsilon$ for all $k$.
Then, the following error bound holds:
\[
\liminf_{k\to\infty} \|G_\tau(\alpha^k)\|_\infty \leq \frac{1+\gamma}{1-\gamma}\varepsilon.	
\]
\end{theorem}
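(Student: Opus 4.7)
My plan is to mimic the dichotomy used in the proof of Theorem~\ref{thm:global-conv} and split on whether AA candidates are accepted only finitely many times or infinitely many times. A useful observation throughout is that for every iterate,
\[
\|G_\tau(\alpha^k) - \hat{G}_\tau(\alpha^k)\|_\infty = \|\hat{F}_\tau \alpha^k - F_\tau \alpha^k\|_\infty = \|e^k\|_\infty \le \varepsilon,
\]
so a bound on either flavor of residual transfers to the other up to an additive $\varepsilon$. Note that $F_\tau$ is still a $\gamma$-contraction (as stated after~\eqref{eq:qmdp-maxent}), so it has a unique fixed point $\alpha_\tau^\star$ against which convergence-type arguments can be made.

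\textbf{Case 1 (finitely many AA acceptances).} In this case there exists $K$ such that $\alpha^{k+1} = \hat{F}_\tau \alpha^k = F_\tau \alpha^k + e^k$ for all $k\ge K$. Subtracting $\alpha_\tau^\star = F_\tau \alpha_\tau^\star$ and invoking the contraction property gives the perturbed recursion
\[
\|\alpha^{k+1} - \alpha_\tau^\star\|_\infty \le \gamma\|\alpha^k - \alpha_\tau^\star\|_\infty + \varepsilon,
\]
from which the standard geometric-series computation yields $\limsup_k \|\alpha^k - \alpha_\tau^\star\|_\infty \le \varepsilon/(1-\gamma)$. Combined with the triangle-inequality bound
\[
\|G_\tau(\alpha^k)\|_\infty \le \|\alpha^k - \alpha_\tau^\star\|_\infty + \|F_\tau\alpha_\tau^\star - F_\tau\alpha^k\|_\infty \le (1+\gamma)\|\alpha^k - \alpha_\tau^\star\|_\infty,
\]
this gives $\limsup_k \|G_\tau(\alpha^k)\|_\infty \le \tfrac{1+\gamma}{1-\gamma}\varepsilon$, which in particular implies the claimed liminf bound.

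\textbf{Case 2 (infinitely many AA acceptances).} Let $\{k_i\}$ enumerate the iterations at which an AA candidate is accepted. The conventional safeguarding check (line 16 of Algorithm~\ref{alg:aa-qmdp}) forces
\[
\|\hat{G}_\tau(\alpha^{k_i})\|_\infty \le D\|g^0\|_\infty(i/N_s + 1)^{-(1+\phi)} \xrightarrow{i\to\infty} 0.
\]
Applying the error-transfer inequality above, $\|G_\tau(\alpha^{k_i})\|_\infty \le \|\hat{G}_\tau(\alpha^{k_i})\|_\infty + \varepsilon$, and so along this subsequence $\liminf_i \|G_\tau(\alpha^{k_i})\|_\infty \le \varepsilon$. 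Since $\tfrac{1+\gamma}{1-\gamma}\ge 1$, this already dominates the target bound.

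The main obstacle I expect is not conceptual but bookkeeping: one must be careful in Case~1 to run the perturbed-contraction recursion only from the last AA acceptance index onward (transient AA steps only affect the initial condition of the recursion, not the limsup), and in Case~2 to verify that $\naa$ grows linearly with $i$ so that the polynomial decay in~\eqref{eq:loose-safeguard} still drives $\|\hat G_\tau(\alpha^{k_i})\|_\infty$ to zero. A conceptual point worth emphasizing in the write-up is that, unlike Theorem~\ref{thm:global-conv}, we cannot hope to prove convergence of the entire sequence $\{\alpha^k\}$ because the simulation noise $e^k$ does not vanish; this is precisely why the statement is phrased with a liminf rather than a limit.
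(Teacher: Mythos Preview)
Your proposal is correct and mirrors the paper's proof: the same dichotomy (eventually-FPI vs.\ infinitely-many-AA), the same perturbed-contraction recursion $\|\alpha^{k+1}-\alpha_\tau^\star\|_\infty\le\gamma\|\alpha^k-\alpha_\tau^\star\|_\infty+\varepsilon$ followed by $\|G_\tau(\alpha^k)\|_\infty\le(1+\gamma)\|\alpha^k-\alpha_\tau^\star\|_\infty$ in Case~1, and the same safeguard-plus-error-transfer argument in Case~2. One small correction in Case~2: not every AA acceptance passes through line~16 (within a run of length $<N_s$ the conventional safeguard is skipped), so you should take $\{k_i\}$ to enumerate only the \emph{initial} AA acceptances of each run, exactly as the paper does---this is precisely the bookkeeping point you already flagged.
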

\begin{proof}
We first assume that Algorithm~\ref{alg:aa-qmdp} selects AA candidates for only finitely many times. Then, there exists some $K$ such that for all $k \geq K$, $\alpha^k$ are generated via FPI. In this case, Algorithm~\ref{alg:aa-qmdp} reduces to FPI with the initial iterate $\alpha^{K}$. Thus, it suffices to prove that the simulation-based FPI satisfies the error bound. To show this, we first notice that
\begin{equation*}
\Vert F_\tau\alpha^k - \alpha^\star \Vert_\infty = \Vert F_\tau \alpha^k - F_\tau \alpha^\star \Vert_\infty \leq \gamma\Vert \alpha^k - \alpha^\star \Vert_\infty,
\end{equation*}
where the last inequality holds since $F_\tau$ is a $\gamma$-contraction.
Thus,  we have
\begin{equation}\label{eq:proof-sim-ineq1}
\begin{split}
\Vert G_\tau(\alpha^k) \Vert_\infty &= \Vert \alpha^k - \alpha^\star + \alpha^\star - F_\tau\alpha^k\Vert_\infty\\
& \leq (1+\gamma)\Vert \alpha^k - \alpha^\star \Vert_\infty.
\end{split}
\end{equation}
To derive an upper bound of $\Vert \alpha^k - \alpha^\star \Vert_\infty$, we observe that
\begin{align*}
\Vert \alpha^{k+1} - \alpha^\star \Vert_\infty &\leq \Vert \underbrace{\hat F_\tau\alpha^k - F_\tau\alpha^k}_{=e^k} \Vert_\infty + \Vert F_\tau\alpha^k - \alpha^\star \Vert_\infty \\
&\leq \varepsilon + \gamma\Vert \alpha^k - \alpha^\star\Vert_\infty.
\end{align*}
Recursively applying this inequality yields
\begin{equation*}
\Vert \alpha^k - \alpha^\star \Vert_\infty \leq \varepsilon(1 + \gamma + \cdots \gamma^{k-1}) + \gamma^k \Vert \alpha^0 - \alpha^\star \Vert_\infty.
\end{equation*}  
Therefore, we have
\begin{equation}\label{eq:proof-sim-ineq2}
\liminf_{k\to\infty} \Vert \alpha^k - \alpha^\star \Vert_\infty \leq \frac{\varepsilon}{1 - \gamma}.
\end{equation}
Combining \eqref{eq:proof-sim-ineq1} and \eqref{eq:proof-sim-ineq2}, we obtain
\begin{equation*}
\liminf_{k\to\infty} \Vert G_\tau(\alpha^k) \Vert_\infty \leq \frac{1+\gamma}{1 - \gamma} \varepsilon.
\end{equation*}

Suppose now that AA candidates are selected infinitely many times. 
Recall that $g^k = \hat{G} (\alpha^k)$. 
Thus, 
\begin{align*}
\Vert G_\tau(\alpha^{k})  - g^{k} \Vert_\infty &= \Vert G_\tau(\alpha^{k})  - \hat G_\tau(\alpha^{k}) \Vert_\infty \\
&= \Vert \underbrace{F(\alpha^{k})  - \hat F(\alpha^{k})}_{=e^k} \Vert_\infty \leq \varepsilon.
\end{align*}
Let $k_i$ denote each initial iteration count for accepting an AA candidate.
Then, the set of $k_i$’s has infinitely many elements, and we have
\begin{align} \nonumber
\liminf_{k\to\infty} \Vert G_\tau(\alpha^k) \Vert_\infty 
&\leq \liminf_{i\to\infty} \Vert G_\tau(\alpha^{k_i}) \Vert_\infty \\\nonumber
&\leq \liminf_{i\to\infty} (\Vert G_\tau(\alpha^{k_i})  - g^{k_i} \Vert_\infty + \Vert g^{k_i} \Vert_\infty) \\\nonumber
&\leq \varepsilon +  \liminf_{i\to\infty} \Vert g^{k_i} \Vert_\infty \\\nonumber
&\leq \varepsilon + D\Vert g^0 \Vert_\infty \lim_{i \to\infty} (i+1)^{-(1+\phi)} \\\nonumber
&= \varepsilon \leq \frac{1+\gamma}{1-\gamma} \varepsilon,
\end{align}
where we use $\| g^{k_i} \|_\infty \leq D \| g^0\|_\infty (i+1)^{-(1+\phi)}$
in the fourth inequality.
{}
\end{proof}

The error bound in
Theorem~\ref{thm:sim} is linear in $\varepsilon$ and depends only on $\gamma$ and $\varepsilon$. 
This implies that AA and entropy regularization  affect the error bound only through $\varepsilon$, which is the error caused by simulations.
The deviation of $\alpha^k$ from the fixed point $\alpha^\star$ of the original QMDP operator $F$ can also be bounded as follows. 

\begin{cor}\label{cor:sim}
Consider Algorithm~\ref{alg:aa-qmdp} with the simulation-based soft QMDP operator $\hat{F}_\tau$.
Suppose there exists a positive constant $\delta$ such that  $\|e^k\|_\infty \leq \varepsilon$ for all $k$.
Then, the following error bound holds:
\begin{equation}\label{eq:alpha-sim-error}
\liminf_{k\to\infty}\|\alpha^k - \alpha^\star\|_\infty \leq \frac{ \gamma\tau\ln|A| + (1+\gamma)\varepsilon}{(1-\gamma)^2}.
\end{equation}
\end{cor}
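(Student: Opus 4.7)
The plan is to decompose the error via the soft QMDP fixed point $\alpha^\star_\tau$ as an intermediate point and bound the two resulting pieces using Theorem~\ref{thm:sim} and Proposition~\ref{pro:alpha-bound} separately. Concretely, by the triangle inequality,
\[
\|\alpha^k - \alpha^\star\|_\infty \leq \|\alpha^k - \alpha^\star_\tau\|_\infty + \|\alpha^\star_\tau - \alpha^\star\|_\infty,
\]
and the second term is already bounded, uniformly in $k$, by $\gamma\tau\ln|A|/(1-\gamma)$ from Proposition~\ref{pro:alpha-bound}.

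Next, I would translate the residual bound of Theorem~\ref{thm:sim} into a bound on $\|\alpha^k - \alpha^\star_\tau\|_\infty$. The key identity uses that $\alpha^\star_\tau$ is the fixed point of $F_\tau$ and that $F_\tau$ is a $\gamma$-contraction under $\|\cdot\|_\infty$ (a property preserved by the soft-max regularization, as noted in Section~\ref{sec:reg-pomdp}). This yields
\[
\|\alpha^k - \alpha^\star_\tau\|_\infty \leq \|\alpha^k - F_\tau \alpha^k\|_\infty + \|F_\tau \alpha^k - F_\tau \alpha^\star_\tau\|_\infty \leq \|G_\tau(\alpha^k)\|_\infty + \gamma \|\alpha^k - \alpha^\star_\tau\|_\infty,
\]
so that $\|\alpha^k - \alpha^\star_\tau\|_\infty \leq \|G_\tau(\alpha^k)\|_\infty/(1-\gamma)$. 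Taking $\liminf$ and applying Theorem~\ref{thm:sim} gives $\liminf_{k\to\infty}\|\alpha^k - \alpha^\star_\tau\|_\infty \leq (1+\gamma)\varepsilon/(1-\gamma)^2$.

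Combining the two pieces, and using the fact that $\|\alpha^\star_\tau - \alpha^\star\|_\infty$ is constant in $k$ so that the $\liminf$ distributes additively, I obtain
\[
\liminf_{k\to\infty}\|\alpha^k - \alpha^\star\|_\infty \leq \frac{\gamma\tau\ln|A|}{1-\gamma} + \frac{(1+\gamma)\varepsilon}{(1-\gamma)^2}.
\]
Since $1-\gamma \in (0,1)$ we have $1/(1-\gamma) \leq 1/(1-\gamma)^2$, so the first summand may be relaxed to $\gamma\tau\ln|A|/(1-\gamma)^2$, matching the claimed bound~\eqref{eq:alpha-sim-error}.

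There is essentially no technical obstacle: this corollary is a routine triangle-inequality composition of the two previously established bounds. The only point requiring a little care is handling the $\liminf$ correctly, but since the regularization gap $\|\alpha^\star_\tau - \alpha^\star\|_\infty$ does not depend on $k$, no loss from the superadditivity of $\liminf$ occurs, and the contraction-based step converting residual bounds into iterate-error bounds is standard for $\gamma$-contractions.
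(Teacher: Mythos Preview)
Your proof is correct and follows essentially the same approach as the paper: both decompose through $\alpha^\star_\tau$, use the $\gamma$-contraction of $F_\tau$ to convert the residual bound from Theorem~\ref{thm:sim} into a bound on $\|\alpha^k-\alpha^\star_\tau\|_\infty$, and add the regularization gap from Proposition~\ref{pro:alpha-bound}. The only cosmetic difference is that your ordering (first bound $\|\alpha^k-\alpha^\star_\tau\|_\infty\le \|G_\tau(\alpha^k)\|_\infty/(1-\gamma)$, then add the gap) yields the slightly sharper intermediate bound $\frac{\gamma\tau\ln|A|}{1-\gamma}+\frac{(1+\gamma)\varepsilon}{(1-\gamma)^2}$, which you then relax to match~\eqref{eq:alpha-sim-error}; the paper's combined decomposition lands directly on the $(1-\gamma)^2$ denominator for both terms.
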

\begin{proof}
It follows from Proposition~\ref{pro:alpha-bound} that
\begin{equation}\label{eq1}
\|\alpha^\star - \alpha^\star_\tau\|_\infty \leq \frac{\gamma\tau \ln|A|}{1-\gamma}.
\end{equation}
At the $k$th iteration of the algorithm ($k\geq 0$), we have
\begin{equation}\label{eq2}
\begin{split}
\|\alpha^\star - \alpha^k\|_\infty 
&= \|\alpha^\star -\alpha^\star_\tau + F_\tau\alpha^\star_\tau - F_\tau \alpha^k
+ F_\tau \alpha^k - \alpha^k\|_\infty \\
&\leq \|\alpha^\star -\alpha^\star_\tau + F_\tau\alpha^\star_\tau - F_\tau \alpha^k \|_\infty
+ \|F_\tau \alpha^k - \alpha^k\|_\infty \\
&\leq \|\alpha^\star -\alpha^\star_\tau\|_\infty + \gamma \|\alpha^\star_\tau - \alpha^k\|_\infty
+ \|F_\tau \alpha^k - \alpha^k\|_\infty,
\end{split}
\end{equation}
where the last inequality follows from the $\gamma$-contraction property of $F_\tau$. 
Combining~\eqref{eq1} and~\eqref{eq2} yields
\begin{equation*}
\begin{split}
(1-\gamma) \|\alpha^\star - \alpha^k\|_\infty 
&\leq \|\alpha^\star - \alpha^\star_\tau\|_\infty + \|\alpha^k - F_\tau \alpha^k\|_\infty \\
&\leq \|\alpha^\star - \alpha^\star_\tau\|_\infty + \|G_\tau(\alpha^k)\|_\infty\\
&\leq \frac{\gamma\tau\ln|A|}{1-\gamma} + \|G_\tau(\alpha^k)\|_\infty.
\end{split}
\end{equation*}
Taking $\liminf$ on both sides and using Theorem~\ref{thm:sim}, we obtain
\begin{equation*}
\begin{split}
(1-\gamma) \liminf_{k\to\infty}\|\alpha^k - \alpha^\star\|_\infty 
&\leq \frac{\gamma\tau\ln|A|}{1-\gamma} + \liminf_{k\to\infty}\|G_\tau(\alpha^k)\|_\infty \\
&\leq \frac{\gamma\tau\ln|A|}{1-\gamma} + \frac{1+\gamma}{1-\gamma}\varepsilon.
\end{split}
\end{equation*}
Dividing both sides by $1-\gamma$, the result follows. 
{}
\end{proof}

\section{Numerical Experiments}\label{sec:num}
In this section, we empirically evaluate the performance of the proposed algorithm on several benchmark problems and analyze the results.

\subsection{The Setup}
Our experiments use popular robot navigation benchmark problems\footnote{\label{footnote}https://www.pomdp.org/examples}, namely \emph{sunysb, fourth, tag}, and \emph{underwater}, each of which is modeled as a finite POMDP.
In these environments, the robot is unable to observe its location and
 has two types of actions: {\tt declare-goal} and {\tt navigate}.
The robot has to first navigate to the goal position using {\tt navigate} action and then take {\tt declare-goal} action due to its inability to localize itself.
In \emph{underwater}, the robot has an additional action called {\tt estimate} by which the robot collects information about the environment to localize itself.
Depending on the environment, one of the following two reward functions is used: The first  penalizes the robot in each time step until it reaches the goal position and chooses \emph{declare-goal},
and the second gives a positive reward only if the robot performs \emph{declare-goal} at the right position; no reward is given for all the other the state-action pairs.
Interestingly, the \emph{tag} environment has a dynamic goal position, namely, the position of the opponent running away from the robot. Each of the other environments has a static goal position.

As a baseline, we use SARSOP~\cite{kurniawati2008sarsop},
a state-of-the art point-based offline POMDP algorithm.
SARSOP is a tree search-based method, and thus the quality of its solutions heavily depends on the amount of  allocated computational resources.
For a fair comparison in terms of the total reward, 
the computation time allowed for the tree search is set to be the minimum computation time for AA-sQMDP.

The benchmark problems are solved with 100 different initializations.
The total reward is then averaged over the 100 trajectories of maximum  length 100.
Each method is tested with two different choices of the initial beliefs: 
a pre-defined fixed initial belief and a randomly chosen belief from $\mathcal{B}$.
The following metrics are used to quantify the performance of each algorithm:
\begin{itemize}
\item $\#\mathrm{iter}$: the minimum number of iterations required to meet the stopping condition $\|g^k\|_\infty < 10^{-6}$,

\item $\#\mathrm{AA}$: the number of iterations in which the AA candidate is selected,

\item $t_{\mathrm{total}}$(sec): the total computation time,
 
\item $\mathrm{reward}_{\mathrm{fixed}}$: the total reward evaluated with a fixed initial belief,

\item $\mathrm{reward}_{\mathrm{rand}}$: the total reward evaluated with a randomly chosen initial belief.

\end{itemize}
All the experiments were conducted on a PC with Intel Core i7-8700K at 3.70GHz. 
The source code of our AA-sQMDP implementation is available online.\footnote{https://github.com/CORE-SNU/AA-POMDP}

\begin{table*}[tb]
\caption{Experiment results: Model-based case ($\mathrm{mean} \pm \mathrm{std}$).}
\begin{center}
\scalebox{0.53}{
\begin{tabular}{|l|l|c|c|c|c|c|c|c|}
\hline
& & \multicolumn{7}{c|}{\textbf{Algorithm}}  \\  \hline
\multirow{2}{*}{\vtop{\hbox{\strut \textbf{Problem}}\hbox{\strut \textbf{($|\mathcal{S}|, |\mathcal{A}|, |\mathcal{O}|$)}}}} 
& \multirow{2}{*}{\textbf{Metric}} & \multirow{2}{*}{\textbf{QMDP}} & \multicolumn{2}{c|}{\textbf{AA-QMDP}}
& \multirow{2}{*}{\textbf{sQMDP}} & \multicolumn{2}{c|}{\textbf{AA-sQMDP}} & \multirow{2}{*}{\textbf{SARSOP}} \\
& & & without $\theta_\mathrm{targ}$ & with $\theta_\mathrm{targ}$ & & without $\theta_\mathrm{targ}$ & with $\theta_\mathrm{targ}$ & \\\hline \hline

\multirow{5}{*}{\vtop{\hbox{\strut \emph{sunysb}}\hbox{\strut ($300,4,28$)}}}
&  $\#\mathrm{iter}$ & 1364.06 $\pm$ 13.59 & 509.63 $\pm$ 143.45 & \textbf{120.44 $\pm$ 10.14} & 2095.00 $\pm$ 0.00 & 102.50 $\pm$ 2.19 & \textbf{91.54 $\pm$ 2.18} & N/A \\
&  $\#\mathrm{AA}$ & N/A & 508.63 $\pm$ 143.45 & 23.61 $\pm$ 3.57 & N/A & 101.50 $\pm$ 2.19 & 70.53 $\pm$ 1.98 & N/A \\
&  $t_{\mathrm{total}}$(sec) & 0.169 $\pm$ 0.002 & 0.258 $\pm$ 0.073 & \textbf{0.061 $\pm$ 0.006} & 1.022 $\pm$ 0.002 & 0.090 $\pm$ 0.003 & \textbf{0.081 $\pm$ 0.003} & 0.081 \\
&  $\mathrm{reward}_{\mathrm{fixed}}$ & 0.759 $\pm$ 0.016 & 0.760 $\pm$ 0.016 & 0.761 $\pm$ 0.017 & 0.767 $\pm$ 0.014 & 0.765 $\pm$ 0.017 & 0.763 $\pm$ 0.015 & 0.434 $\pm$ 0.233 \\
&  $\mathrm{reward}_{\mathrm{rand}}$ & 0.377 $\pm$ 0.040 & 0.388 $\pm$ 0.043 & 0.383 $\pm$ 0.042 & 0.317 $\pm$ 0.040 & 0.305 $\pm$ 0.039 & 0.316 $\pm$ 0.040 & 0.020 $\pm$ 0.016  \\ 
\hline

\multirow{5}{*}{\vtop{\hbox{\strut \emph{fourth}}\hbox{\strut ($1052,4,28$)}}}
&  $\#\mathrm{iter}$ & 1362.62 $\pm$ 13.05 & 1119.73 $\pm$ 266.48 & \textbf{257.11 $\pm$ 32.16} & 2095.00 $\pm$ 0.00 & 105.26 $\pm$ 1.23 & \textbf{98.42 $\pm$ 2.00} & N/A \\
&  $\#\mathrm{AA}$ & N/A & 1118.73 $\pm$ 266.48 & 30.27 $\pm$ 9.72 & N/A & 104.26 $\pm$ 1.23 & 71.00 $\pm$ 1.66 & N/A \\
&  $t_{\mathrm{total}}$(sec) & 0.241 $\pm$ 0.003 & 0.907 $\pm$ 0.216 & \textbf{0.206 $\pm$ 0.026} & 2.701 $\pm$ 0.011 & 0.208 $\pm$ 0.004 & \textbf{0.196 $\pm$ 0.005} & 0.196 \\
&  $\mathrm{reward}_{\mathrm{fixed}}$ & 0.592 $\pm$ 0.013 & 0.592 $\pm$ 0.010 & 0.592 $\pm$ 0.011 & 0.601 $\pm$ 0.009 & 0.505 $\pm$ 0.212 & 0.523 $\pm$ 0.193 & \textbf{fail} \\
&  $\mathrm{reward}_{\mathrm{rand}}$ & 0.371 $\pm$ 0.030 & 0.380 $\pm$ 0.034 & 0.372 $\pm$ 0.038 & 0.314 $\pm$ 0.035 & 0.278 $\pm$ 0.031 & 0.276 $\pm$ 0.035 & \textbf{fail} \\ 
\hline

\multirow{5}{*}{\vtop{\hbox{\strut \emph{tag}}\hbox{\strut ($870,5,30$)}}}
&  $\#\mathrm{iter}$ & 315.62 $\pm$ 0.49 & \textbf{87.58 $\pm$ 8.49} & 105.27 $\pm$ 8.70 & 414.00 $\pm$ 0.00 & 62.62 $\pm$ 1.35 & \textbf{58.16 $\pm$ 1.41} & N/A \\
&  $\#\mathrm{AA}$ & N/A & 86.58 $\pm$ 8.49 & 56.65 $\pm$ 7.37 & N/A & 61.62 $\pm$ 1.35 & 40.58 $\pm$ 1.21 & N/A \\
&  $t_{\mathrm{total}}$(sec) & \textbf{0.072 $\pm$ 0.001} & 0.085 $\pm$ 0.009 & 0.101 $\pm$ 0.009 & 0.691 $\pm$ 0.003 & 0.157 $\pm$ 0.004 & \textbf{0.144 $\pm$ 0.005} & 0.144 \\
&  $\mathrm{reward}_{\mathrm{fixed}}$ & $-15.932$ $\pm$ 0.696 & $-16.039$ $\pm$ 0.773 & $-16.001$ $\pm$ 0.670 & $-6.643$ $\pm$ 0.638 & $-6.699$ $\pm$ 0.630 & $-6.735$ $\pm$ 0.628 & $-18.961$ $\pm$ 0.466 \\
&  $\mathrm{reward}_{\mathrm{rand}}$ & $-15.695$ $\pm$ 0.908 & $-15.627$ $\pm$ 0.985 & $-15.445$ $\pm$ 1.040 & $-6.389$ $\pm$ 0.654 & $-6.441$ $\pm$ 0.668 & $-6.351$ $\pm$ 0.616 & $-18.316$ $\pm$ 0.601 \\ 
\hline

\multirow{5}{*}{\vtop{\hbox{\strut \emph{underwater}}\hbox{\strut ($2653,6,102$)}}}
&  $\#\mathrm{iter}$ & 446.21 $\pm$ 5.68 & 240.14 $\pm$ 14.96 & \textbf{78.50 $\pm$ 13.78} & 441.54 $\pm$ 5.11 & 165.98 $\pm$ 6.10 & \textbf{132.53 $\pm$ 11.63} & N/A \\
&  $\#\mathrm{AA}$ & N/A & 239.14 $\pm$ 14.96 & 2.29 $\pm$ 1.54 & N/A & 164.98 $\pm$ 6.10 & 22.82 $\pm$ 2.85 & N/A \\
&  $t_{\mathrm{total}}$(sec) & 0.202 $\pm$ 0.003 & 0.650 $\pm$ 0.044 & \textbf{0.201 $\pm$ 0.038} & 2.197 $\pm$ 0.024 & 1.260 $\pm$ 0.047 & \textbf{1.005 $\pm$ 0.093} & 1.005 \\
&  $\mathrm{reward}_{\mathrm{fixed}}$ & $-44.353$ $\pm$ 42.644 & $-43.957$ $\pm$ 47.155 & $-45.929$ $\pm$ 44.507 & 681.420 $\pm$ 8.228 & 680.654 $\pm$ 7.600 & 681.448 $\pm$ 8.456 & 688.200 $\pm$ 8.081 \\
&  $\mathrm{reward}_{\mathrm{rand}}$ & 3307.193 $\pm$ 248.000 & 3301.785 $\pm$ 220.116 & 3289.063 $\pm$ 237.469 & 3230.717 $\pm$ 267.828 & 3243.995 $\pm$ 261.686 & 3251.146 $\pm$ 241.516 & 3273.417 $\pm$ 228.879 \\
\hline

\end{tabular}
}
\label{tab:aa-pomdp-model}
\end{center}
\end{table*}

\subsection{Model-Based Method}

Table~\ref{tab:aa-pomdp-model} presents the numerical results of AA-sQMDP and SARSOP in the standard model-based setting. We also compare our algorithm to QMDP, soft QMDP (or sQMDP), and AA-$(s)$QMDP that does not use the target acceleration factor. In particular, the results obtained using the safeguarding rule based on the target acceleration factor are shown in the columns ``with $\theta_\mathrm{targ}$", and the columns marked  ``without $\theta_\mathrm{targ}$" present the results of the safeguarded AA only using the target residual gain.
The hyperparameters of the algorithms were optimized within the following ranges:
$\bar{m} = 1$, $m \in \{10^{-2}, 10^0, 10^2, 10^4\}$, $\tau  \in \{10^1, 10^3, 10^5\}$.
Additional hyperparameter values were chosen as follows:
$D = 10^6$, $\varepsilon = 10^{-6}$,
$N_s = 400$, $\eta = 10^{-16}$, and 
$\mmax = 16$.

As shown in Table~\ref{tab:aa-pomdp-model}, using AA significantly reduces the number of iterations to solve all the POMDP problems. However, as each AA sequence converges to the same fixed point as that of FPI, both $\mathrm{reward}_{\mathrm{fixed}}$ and $\mathrm{reward}_{\mathrm{rand}}$ remain unchanged or are affected to a small extent when AA is used. For example, in \emph{tag} environment, the last three methods that uses soft QMDP attain $\mathrm{reward}_{\mathrm{fixed}}$ around $-6.7$ although the three methods differ from each others in the use of AA and the target acceleration factor. This observation is consistent with Theorem~\ref{thm:global-conv}, where the global convergence of AA-sQMDP to the correct fixed point is proved.

Furthermore, the effect of the maximum entropy regularization is clearly observed when comparing the results of AA-QMDP and AA-sQMDP.
Regardless of the use of the target acceleration factor, AA-sQMDP requires significantly fewer  iterations until convergence than AA-QMDP while also maintaining higher rewards.\footnote{While the required number of iterations is saved to a large extent by using the maximum entropy regularization, the actual computation time may not be reduced. This is because when using a large $\tau$, arithmetic operations involved in a single update of soft QMDP have to be executed on numbers of large scale, as pointed out in Appendix~\ref{appx:mellowmax}. This can be relieved by employing the KL divergence regularization introduced in Appendix~\ref{appx:mellowmax} instead of the maximum entropy regularization. For instance, see Table~\ref{tab:aa-pomdp-mellow}.} However, such a tendency is not observed in \emph{sunysb} and \emph{fourth} when AA is not used. This implies that the benefit of using the maximum entropy method comes from the quasi-Newton nature of AA. It is also notable that in \emph{tag} and \emph{underwater}, the maximum entropy regularization greatly benefits the total reward.
In fact, using a large $\tau$  benefits POMDPs in which the uncertainty of the belief lasts many time steps: In \emph{tag}, the agent does not know the position of the moving target, and in \emph{underwater}, the agent is unable to identify the exact position of itself.

We also confirm that using the target acceleration factor increases the efficiency of AA, as few AA steps are rejected via safeguarding without $\theta_{\mathrm{targ}}$. However, our novel safeguarding rule with $\theta_{\mathrm{targ}}$ effectively excludes many unsuccessful AA steps. As a result, the employment of the target acceleration factor  largely reduces the number of iterations needed for the convergence of AA-sQMDP.
Notably, in \emph{sunysb}, using AA saves $93\%$ of $\#\mathrm{iter}$ required for FPI.

Finally, in terms of the cumulative reward, AA-sQMDP is better or as good as or better than SARSOP in all environments.
However, the computation of an AA-sQMDP policy~\eqref{eq:state-space-policy} is incomparably faster than the tree search procedure of SARSOP. This is one of the key advantages of our offline method compared to the popular online methods, making the use of our method feasible in practice. To summarize, the policies computed offline by AA-sQMDP  are able to solve nontrivial POMDP problems in practice, achieving a state-of-the-art performance even without costly online computation or search processes.

\begin{table*}[tb]
\caption{Experiment results: Simulation-based case ($\mathrm{mean} \pm \mathrm{std}$). $D = 10$ is used in all problems.}
\begin{center}
\scalebox{0.6}{
\begin{tabular}{|l|l|c|c|c|c|c|c|}
\hline
& & \multicolumn{6}{c|}{\textbf{Algorithm}}  \\  \hline
\multirow{2}{*}{\vtop{\hbox{\strut \textbf{Problem}}\hbox{\strut \textbf{($|\mathcal{S}|, |\mathcal{A}|, |\mathcal{O}|$)}}}} 
& \multirow{2}{*}{\textbf{Metric}} & \multirow{2}{*}{\textbf{QMDP}} & \multicolumn{2}{c|}{\textbf{AA-QMDP}}
& \multirow{2}{*}{\textbf{sQMDP}} & \multicolumn{2}{c|}{\textbf{AA-sQMDP}} \\
& & & without $\theta_\mathrm{targ}$ & with $\theta_\mathrm{targ}$ & & without $\theta_\mathrm{targ}$ & with $\theta_\mathrm{targ}$ \\\hline \hline

\multirow{5}{*}{\vtop{\hbox{\strut \emph{sunysb}}\hbox{\strut ($300,4,28$)}}}
& $\#\mathrm{iter}$ & 1361.51 $\pm$ 16.30 & 627.58 $\pm$ 184.63 & \textbf{140.08 $\pm$ 19.63} & 2095.00 $\pm$ 0.00 & 104.59 $\pm$ 2.08 & \textbf{93.77 $\pm$ 2.62} \\
& $\#\mathrm{AA}$ & N/A & 626.58 $\pm$ 184.63 & 46.69 $\pm$ 11.81 & N/A & 103.59 $\pm$ 2.08 & 72.61 $\pm$ 2.24 \\
& $t_{\mathrm{total}}$(sec) & 0.629 $\pm$ 0.011 & 0.567 $\pm$ 0.166 & \textbf{0.132 $\pm$ 0.020} & 1.756 $\pm$ 0.013 & 0.143 $\pm$ 0.010 & \textbf{0.130 $\pm$ 0.010} \\
& $\mathrm{reward}_{\mathrm{fixed}}$ & 0.744 $\pm$ 0.030 & 0.744 $\pm$ 0.033 & 0.739 $\pm$ 0.035 & 0.769 $\pm$ 0.013 & 0.766 $\pm$ 0.013 & 0.763 $\pm$ 0.017 \\
& $\mathrm{reward}_{\mathrm{rand}}$ & 0.377 $\pm$ 0.050 & 0.380 $\pm$ 0.046 & 0.390 $\pm$ 0.052 & 0.298 $\pm$ 0.040 & 0.301 $\pm$ 0.044 & 0.300 $\pm$ 0.048 \\ 
\hline

\multirow{5}{*}{\vtop{\hbox{\strut \emph{fourth}}\hbox{\strut ($1052,4,28$)}}}
& $\#\mathrm{iter}$ & 1361.53 $\pm$ 15.45 & 1325.61 $\pm$ 290.75 & \textbf{274.90 $\pm$ 28.67} & 2095.00 $\pm$ 0.00 & 107.67 $\pm$ 1.82 & \textbf{99.54 $\pm$ 2.21} \\
& $\#\mathrm{AA}$ & N/A & 1324.61 $\pm$ 290.75 & 38.87 $\pm$ 12.24 & N/A & 106.67 $\pm$ 1.82 & 71.76 $\pm$ 1.90 \\
& $t_{\mathrm{total}}$(sec) & 0.732 $\pm$ 0.011 & 1.596 $\pm$ 0.353 & \textbf{0.340 $\pm$ 0.037} & 3.481 $\pm$ 0.017 & 0.263 $\pm$ 0.009 & \textbf{0.246 $\pm$ 0.010} \\
& $\mathrm{reward}_{\mathrm{fixed}}$ & 0.597 $\pm$ 0.010 & 0.595 $\pm$ 0.011 & 0.593 $\pm$ 0.012 & 0.596 $\pm$ 0.010 & 0.512 $\pm$ 0.205 & 0.511 $\pm$ 0.232 \\
& $\mathrm{reward}_{\mathrm{rand}}$ & 0.389 $\pm$ 0.037 & 0.390 $\pm$ 0.035 & 0.386 $\pm$ 0.030 & 0.311 $\pm$ 0.035 & 0.270 $\pm$ 0.032 & 0.273 $\pm$ 0.041 \\ 
\hline

\multirow{5}{*}{\vtop{\hbox{\strut \emph{tag}}\hbox{\strut ($870,5,30$)}}}
& $\#\mathrm{iter}$ & 315.63 $\pm$ 0.48 & \textbf{88.44 $\pm$ 10.29} & 103.68 $\pm$ 9.56 & 414.00 $\pm$ 0.00 & 62.69 $\pm$ 1.29 & \textbf{58.17 $\pm$ 1.18} \\
& $\#\mathrm{AA}$ & N/A & 87.44 $\pm$ 10.29 & 57.04 $\pm$ 9.09 & N/A & 61.69 $\pm$ 1.29 & 40.56 $\pm$ 1.02 \\
& $t_{\mathrm{total}}$(sec) & 0.223 $\pm$ 0.002 & \textbf{0.131 $\pm$ 0.020} & 0.150 $\pm$ 0.018 & 0.900 $\pm$ 0.008 & 0.191 $\pm$ 0.011 & \textbf{0.176 $\pm$ 0.008} \\
& $\mathrm{reward}_{\mathrm{fixed}}$ & $-16.507$ $\pm$ 0.985 & $-16.563$ $\pm$ 0.945 & $-16.637$ $\pm$ 1.042 & $-6.734$ $\pm$ 0.679 & $-6.859$ $\pm$ 0.606 & $-6.777$ $\pm$ 0.607 \\
& $\mathrm{reward}_{\mathrm{rand}}$ & $-15.514$ $\pm$ 1.220 & $-15.481$ $\pm$ 1.188 & $-15.648$ $\pm$ 1.239 & $-6.483$ $\pm$ 0.591 & $-6.223$ $\pm$ 0.627 & $-6.428$ $\pm$ 0.686 \\ 
\hline

\multirow{5}{*}{\vtop{\hbox{\strut \emph{underwater}}\hbox{\strut ($2653,6,102$)}}}
& $\#\mathrm{iter}$ & 446.75 $\pm$ 3.64 & 241.72 $\pm$ 15.58 & \textbf{77.46 $\pm$ 12.75} & 441.29 $\pm$ 5.78 & 166.13 $\pm$ 6.88 & \textbf{132.52 $\pm$ 13.66} \\
& $\#\mathrm{AA}$ & N/A & 240.72 $\pm$ 15.58 & 2.30 $\pm$ 1.02 & N/A & 165.13 $\pm$ 6.88 & 22.86 $\pm$ 2.79 \\
& $t_{\mathrm{total}}$(sec) & 0.446 $\pm$ 0.006 & 0.732 $\pm$ 0.051 & \textbf{0.232 $\pm$ 0.040} & 2.464 $\pm$ 0.033 & 1.317 $\pm$ 0.056 & \textbf{1.048 $\pm$ 0.110} \\
& $\mathrm{reward}_{\mathrm{fixed}}$ & $-43.825$ $\pm$ 44.207 & $-47.085$ $\pm$ 50.013 & $-52.007$ $\pm$ 47.372 & 681.270 $\pm$ 7.105 & 681.291 $\pm$ 8.274 & 679.472 $\pm$ 8.468 \\
& $\mathrm{reward}_{\mathrm{rand}}$ & 3243.665 $\pm$ 237.957 & 3241.428 $\pm$ 254.624 & 3293.090 $\pm$ 247.493 & 3241.704 $\pm$ 258.867 & 3227.778 $\pm$ 249.721 & 3233.540 $\pm$ 245.822 \\  
\hline
\end{tabular}
}
\label{tab:aa-pomdp-sample}
\end{center}
\end{table*}

\subsection{Simulation-Based Method}
Table~\ref{tab:aa-pomdp-sample}  presents the results of using the simulation-based method with dataset size $J = 10$.
The results indicate that even without the exact model, 
similar levels of performance are achieved for
 every metric. Only small differences of rewards between Table~\ref{tab:aa-pomdp-model} and Table~\ref{tab:aa-pomdp-sample} are observed. Therefore, the simulation-based method, even with a small dataset, successfully replaces the exact method, with only a slightly diminished performance and requiring slightly more  computation time  to construct the approximate models~\eqref{eq:qmdp-sim} online.

\section{Conclusions}

 We presented AA-sQMDP, an accelerated offline PODMP method that carefully combines AA and  
  the maximum entropy regularization of  QMDP.
Our algorithm enjoys theoretical guarantees, including
the convergence to the soft QMDP solution and  provable error bounds for the simulation-based implementation. 
  The results of our numerical experiments demonstrate the superior performance of AA-sQMDP compared to its standard counterparts,
   in terms of the total number of iterations and  total computation time. 
  
In the future, the proposed method can be extended in several interesting directions, including
$(i)$ application to partially observable reinforcement learning, and 
$(ii)$  adaptive selection of the temperature parameter,  depending on how the algorithm proceeds.

\appendix
\section{Proofs for Section~\ref{sec:reg-pomdp}}
\subsection{Proof of Lemma~\ref{lem:contraction-qmdp}}\label{app:lem1}
\begin{proof}
Fix arbitrary $\alpha, \alpha' \in \R^{|S| \times |A|}$. For any $(s, a)\in S\times A$, we have
\begin{equation}\nonumber
\begin{split}
| F \alpha (s,a) - F \alpha' (s,a) | 
&\leq \gamma \sum_{s' \in S} T(s, a, s') \left | \max_{a' \in A} \alpha (s', a') - \max_{a' \in A} \alpha' (s', a') \right | \\
&\leq \gamma \sum_{s' \in S} T(s, a, s') \max_{a'\in A}| \alpha (s', a') - \alpha' (s', a') | \\
&\leq \gamma \sum_{s' \in S} T(s, a, s') \| \alpha  - \alpha'  \|_\infty \\
&\leq \gamma \| \alpha - \alpha' \|_\infty.
\end{split}
\end{equation}
Since $\alpha$ and  $\alpha'$ were arbitrarily chosen, $F$ is a $\gamma$-contraction mapping with respect to $\| \cdot \|_\infty$. {}
\end{proof}

\subsection{Proof of Proposition~\ref{pro:alpha-bound}} \label{app:prop}
\begin{proof}
Recall that the soft QMDP operator is given by
\[
F_\tau \alpha(s,a) = R(s, a) + \gamma \sum_{s' \in S} T (s, a, s') \ent^\star_{\tau}(\alpha(s', \cdot)),
\]
where
\[
\ent^\star_{\tau}(\alpha(s, \cdot)) = \max_{\phi \in \mathcal{P}(A)} \left[\sum_{a \in A} \phi(a) \alpha(s, a) + \tau \ent(\phi) \right].
\]
Since $0 \leq \ent(\phi) \leq \ln|A|$, we have
\begin{equation}\label{eq:ent-ineq}
\max_{a'\in A} \alpha(s',a') \leq \ent^\star_{\tau}(\alpha(s', \cdot)) \leq \max_{a'\in A} \alpha(s',a') + \tau \ln|A|
\end{equation}
for an arbitrary $\alpha \in \R^{|S| |A|}$ and $s' \in S$. It follows from  the second inequality of~\eqref{eq:ent-ineq} that for every $(s,a) \in S \times A$,
\[
F \alpha(s,a) \leq F_\tau \alpha(s,a) \leq F \alpha(s,a) + \gamma \tau \ln|A|.
\]
Since $F_\tau$ is a monotone operator, applying $F_\tau$ twice yields
\begin{equation*}
\begin{split}
F_\tau (F_\tau \alpha(s,a)) &\leq F_\tau(F \alpha(s,a) + \gamma \tau) \\
&\leq F_\tau(F \alpha(s,a)) + \gamma^2 \tau \ln|A|\\
&\leq F^2 \alpha(s,a) + (\gamma \tau + \gamma^2 \tau)\ln|A|.
\end{split}
\end{equation*}
In general, the following inequality is true for any $n$, which can be shown by induction:
\begin{equation*}
(F_\tau)^n \alpha (s, a) \leq F^n\alpha (s, a) + (\gamma + \cdots + \gamma^n)\tau\ln|A|.
\end{equation*}
Letting $n \to \infty$ on both sides yields
\[
\alpha^\star_\tau(s,a) \leq \alpha^\star(s,a) + \frac{\gamma\tau\ln|A|}{1-\gamma}.
\]
Similarly using the first inequality of~\eqref{eq:ent-ineq}, we can show that $\alpha^\star$ is the lower bound of $\alpha^\star_\tau$, i.e.,
\[
\alpha^\star_\tau(s,a) \geq \alpha^\star(s,a).
\]
Combining these two bounds leads to
\[
0 \leq \alpha^\star_\tau(s,a) - \alpha^\star(s,a) \leq \frac{\gamma\tau\ln|A|}{1-\gamma},
\]
and taking maximum on both sides completes the proof. {}
\end{proof}

\section{Maximum Entropy FIB}\label{appx:reg-fib}

\subsection{Fast Informed Bound Method}
FIB is another offline state-space method using a fixed number of $\alpha$-vectors~\cite{hauskrecht2000value}.
At the $k$th iteration of FIB, each $\alpha$-vector is updated as follows:
\begin{equation}\label{fib}
\begin{split}
&\alpha^{k+1}_a (s):= R(s, a) + \gamma \sum_{z} \max_{a'} \sum_{s'} T^O(s, a, s', z) \alpha^k_{a'} (s'),
\end{split}
\end{equation}
where the joint probability kernel $T^O : S \times A \to \mathcal{P}(S \times Z)$ is defined as
\begin{equation*}
T^O(s, a, s', z) := T (s, a, s') O (s', a, z).
\end{equation*}
It is well-known that the FIB update serves as an upper bound of the exact VI update~\cite{hauskrecht2000value}.
The FIB update~\eqref{fib} can be derived from the exact Bellman operator~\eqref{bellman} by decoupling the expectation over $b(s)$ from $R(b, a)$ and $\mathrm{Pr}(z | b, a)$.
Then, the cardinality of $\Gamma_k$ remains unchanged as $k$ increases, and the time complexity of each iteration of FIB is in $\mathcal{O}(|A|^2 |S|^2 |Z|)$.
The FIB update is independent of belief states similarly to the QMDP update.


In a similar fashion to the QMDP case, we define the FIB operator $F:\R^{|S| |A|}\to\R^{|S| |A|}$ as follows:
\begin{equation}\label{eq:fib}
\begin{split}
&(F \alpha) ( s, a) := R(s,a) + \gamma \sum_{z \in Z} \max_{a' \in A} \sum_{s' \in S}  T^O(s, a, s', z) \alpha (s', a').
\end{split}
\end{equation}
It is straightforward to show that the FIB operator is a $\gamma$-contraction. 

\begin{lemma}\label{lem:contraction-fib}
The FIB operator $F$~\eqref{eq:fib} is a $\gamma$-contraction mapping with respect to $\| \cdot \|_\infty$, i.e.,
\[
\| F \alpha - F \alpha' \|_\infty  \leq \gamma \| \alpha - \alpha' \|_\infty \quad \alpha, \alpha' \in \mathbb{R}^{|S||A|}.
\]
\end{lemma}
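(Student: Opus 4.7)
The plan is to mirror the proof of Lemma~\ref{lem:contraction-qmdp} essentially verbatim, replacing the transition kernel $T(s,a,s')$ with the joint kernel $T^O(s,a,s',z)$ and accommodating the additional outer sum over observations. The structure will be to fix $\alpha, \alpha' \in \mathbb{R}^{|S||A|}$, pick an arbitrary $(s,a) \in S \times A$, and bound $|F\alpha(s,a) - F\alpha'(s,a)|$ by $\gamma \|\alpha - \alpha'\|_\infty$; since the bound is uniform in $(s,a)$, taking the supremum on the left-hand side yields the desired inequality.

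The key algebraic step is the standard ``max of differences dominates difference of maxes'' estimate: the subtraction inside the summation over $z$ can be pushed through the outer $\sum_z$ via the triangle inequality, and inside one uses $|\max_{a'} f(a') - \max_{a'} g(a')| \leq \max_{a'} |f(a') - g(a')|$ to peel off the inner maxima. Afterwards, $\sum_{s'} T^O(s,a,s',z) |\alpha(s',a') - \alpha'(s',a')|$ is bounded by $\|\alpha - \alpha'\|_\infty \sum_{s'} T^O(s,a,s',z)$, since $T^O \geq 0$ and the infinity-norm dominates each coordinate.

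The only point deserving a sentence of care is the total-mass calculation for $T^O$ summed over both $s'$ and $z$:
\[
\sum_{z \in Z} \sum_{s' \in S} T^O(s,a,s',z) \;=\; \sum_{s' \in S} T(s,a,s') \sum_{z \in Z} O(s',a,z) \;=\; \sum_{s' \in S} T(s,a,s') \;=\; 1,
\]
where Fubini's theorem rearranges the two finite sums, $\sum_z O(s',a,z)=1$ because $O(s',a,\cdot)$ is a probability distribution on $Z$, and $\sum_{s'} T(s,a,s')=1$ for the same reason on $S$. This collapses the outer $\sum_z$ together with the inner $\sum_{s'} T^O$ into the constant $1$, leaving the clean bound $\gamma \|\alpha - \alpha'\|_\infty$.

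I do not expect any genuine obstacle here; the only subtlety compared to the QMDP case is that the maximum over $a'$ now appears inside the $\sum_z$, so one must apply the max-of-differences inequality once per observation $z$ before absorbing the $z$-sum into the marginalization above. Aside from that bookkeeping, the argument is a direct adaptation of Appendix~\ref{app:lem1}.
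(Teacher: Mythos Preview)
Your proposal is correct. The paper does not actually supply a proof of Lemma~\ref{lem:contraction-fib}---it simply states that the result is ``straightforward''---but your argument is precisely the natural adaptation of the proof of Lemma~\ref{lem:contraction-qmdp} in Appendix~\ref{app:lem1}, and the one extra ingredient you highlight (the total-mass identity $\sum_{z}\sum_{s'} T^O(s,a,s',z)=1$) is exactly what is needed to close the estimate.
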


\subsection{{{Soft}} FIB}
Here we present {\emph{soft FIB},} the maximum entropy regularization of FIB. 
As in the case of QMDP, the FIB update can be 
expressed as
\begin{equation}\label{eq:fib-phi}
\begin{split}
&\alpha^{k+1} (s, a) := R(s, a) + \gamma \sum_{z \in Z} \max_{\phi \in \mathcal{P}(A)} \sum_{a' \in A}  \phi(a') \tilde{\alpha}_{s,a,z}^k(a'),\\
\end{split}
\end{equation}
where $\tilde{\alpha}_{s,a,z}^k (a') := \sum_{s' \in S} T^O(s, a, s', z)  \alpha^k (s', a')$.
Then, the soft FIB update rule can be obtained as
\begin{equation*}
\begin{split}
&\alpha_a^{k+1} (s) := R(s, a) + \gamma \sum_{z \in Z} \ent^\star_{\tau, z}(\tilde{\alpha}_{s,a,z}^k), \\
&\ent^\star_{\tau,z}(\tilde{\alpha}_{s,a,z}^k) := \max_{\phi \in \mathcal{P}(A)} \left[ \sum_{a \in A} 
\phi(a) \tilde{\alpha}_{s,a,z}^k(a) + {\tau}\ent(\phi) \right] = \tau \ln \left ( \sum_{a'\in A} \exp
\frac{\tilde{\alpha}_{s,a,z}(a')}{\tau} \right ).
\end{split}
\end{equation*}
We define the soft FIB operator $F_\tau$ as follows:
\begin{equation}\label{eq:fib-maxent}
\begin{split}
&(F_\tau \alpha) ( s, a) := R(s,a) + \gamma \sum_{z\in Z} \tau \ln \left ( \sum_{a'\in A} \exp
\frac{\tilde{\alpha}_{s,a,z}(a')}{\tau} \right ).
\end{split}
\end{equation}
The soft FIB operator  shares most of the properties with the soft QMDP operator including the contraction property.

\section{Regularization with KL divergence}\label{appx:mellowmax}
Apart from the maximum entropy regularization, we can also consider an alternative regularization method using the Kullback-Leibler (KL) divergence between $\phi$ 
and the uniform distribution $\mathcal{U}_A \in \mathcal{P}(A)$, defined as
\begin{equation*}
\begin{split}
\kl(\phi\mid\mid\mathcal{U}_A) &= \sum_{a\in A} \phi(a) \ln\frac{\phi(a)}{\mathcal{U}_A(a)} \\
&= -\ent(\phi) + \ln |A|.
\end{split}
\end{equation*}
In this case, we augment the operator~\eqref{eq:qmdp-phi-vector} with $\tau\kl(\phi\mid\mid\mathcal{U}_A)$ as follows:
\begin{equation*}
\begin{split}
\alpha_a^{k+1} (s) &:= R(s, a) + \gamma \sum_{s' \in S} T (s, a, s') \mathcal{K}^\star_{\tau}(\alpha^k(s', \cdot)), \\
\mathcal{K}^\star_{\tau}(\alpha(s, \cdot)) 
&:= \max_{\phi \in \mathcal{P}(A)} \left[\sum_{a \in A} \phi(a) \alpha (s, a) - \tau\kl(\phi\mid\mid\mathcal{U}_A)  \right] \\
&= \ent^\star_{\tau}(\alpha(s, \cdot)) - \tau\ln |A|.
\end{split}
\end{equation*}
The corresponding fixed-point operator, which we call the \emph{kQMDP} operator, is defined as follows:
\begin{equation}\label{eq:qmdp-mellow}
\begin{split}
&(F_\tau \alpha) ( s, a) := R(s,a) + \gamma \sum_{s' \in S} T(s, a, s') \cdot \tau \ln \left (\frac{1}{|A|}  \sum_{a' \in A} \exp \frac{\alpha(s', a')}{\tau} \right ).
\end{split}
\end{equation}
Similarly, in the case of KL regularized FIB, the \emph{kFIB} operator  is defined as
\begin{equation}\label{eq:fib-mellow}
\begin{split}
&(F_\tau \alpha) ( s, a) := R(s,a) + \gamma \sum_{z\in Z} \tau \ln \left ( \frac{1}{|A|}\sum_{a'\in A}  \exp
\frac{\tilde{\alpha}_{s,a,z}(a')}{\tau} \right ).
\end{split}
\end{equation}
Since they differ from~\eqref{eq:qmdp-maxent} and~\eqref{eq:fib-maxent} by a constant, both~\eqref{eq:qmdp-mellow} and~\eqref{eq:fib-mellow} are also $\gamma$-contractions whose fixed points can be obtained via FPI. This type of regularization, also known as MellowMax~\cite{asadi2017alternative}, is numerically more desirable than the maximum entropy regularization especially when a large $\tau$ is used. By subtracting $\tau\ln|A|$ from $\mathcal{H}_{\tau}^\star$, we prevent $\mathcal{K}_\tau^\star$ from diverging as $\tau \to \infty$. Indeed, it is straightforward to derive the following inequalities for the KL divergence regularization:
\begin{equation}\label{eq:kl-div-ineq}
\frac{1}{|A|}\sum_{a} \alpha(s, a) \leq \mathcal{K}_\tau^\star(\alpha(s, \cdot)) \leq \max_{a} \alpha(s, a).
\end{equation}
In contrast to~\eqref{eq:entropy-ineq} where both the lower bound and the upper bound of $\mathcal{H}_{\tau}^\star$ depends linearly on $\tau$, the bounds appeared in~\eqref{eq:kl-div-ineq} are independent on $\tau$. This eliminates the possibility of performing arithmetic operations on extremely large numbers while this is unavoidable in the case of the maximum entropy regularization with large $\tau$.

\section{Results of Additional Experiments}\label{appx:exp}
In this section, we provide the results of  additional experiments using soft FIB and the KL divergence regularization.

\begin{table*}[tb] 
\caption{Experiment results:  Soft FIB ($\mathrm{mean} \pm \mathrm{std}$).}
\begin{center}
\scalebox{0.6}{
\begin{tabular}{|l|l|c|c|c|c|c|c|}
\hline
& & \multicolumn{6}{c|}{\textbf{Algorithm}}  \\  \hline
\multirow{2}{*}{\vtop{\hbox{\strut \textbf{Problem}}\hbox{\strut \textbf{($|\mathcal{S}|, |\mathcal{A}|, |\mathcal{O}|$)}}}} 
& \multirow{2}{*}{\textbf{Metric}} & \multirow{2}{*}{\textbf{FIB}} & \multicolumn{2}{c|}{\textbf{AA-FIB}}
& \multirow{2}{*}{\textbf{sFIB}} & \multicolumn{2}{c|}{\textbf{AA-sFIB}} \\
& & & without $\theta_\mathrm{targ}$ & with $\theta_\mathrm{targ}$ & & without $\theta_\mathrm{targ}$ & with $\theta_\mathrm{targ}$ \\\hline \hline


\multirow{5}{*}{\vtop{\hbox{\strut \emph{sunysb}}\hbox{\strut ($300,4,28$)}}}
& $\#\mathrm{iter}$ & 1361.06 $\pm$ 14.04 & 468.94 $\pm$ 173.61 & \textbf{119.99 $\pm$ 11.59} & 2427.00 $\pm$ 0.00 & 108.45 $\pm$ 2.13 & \textbf{101.88 $\pm$ 2.86} \\
& $\#\mathrm{AA}$  & N/A & 467.94 $\pm$ 173.61 & 25.35 $\pm$ 6.15 & N/A & 107.45 $\pm$ 2.13 & 70.64 $\pm$ 2.19 \\
& $t_{\mathrm{total}}$(sec) & 3.766 $\pm$ 0.036 & 1.629 $\pm$ 0.601 & \textbf{0.413 $\pm$ 0.040} & 26.552 $\pm$ 0.085 & 1.337 $\pm$ 0.027 & \textbf{1.238 $\pm$ 0.035} \\
& $\mathrm{reward}_{\mathrm{fixed}}$ & 0.760 $\pm$ 0.016 & 0.760 $\pm$ 0.016 & 0.759 $\pm$ 0.016 & 0.767 $\pm$ 0.015 & 0.765 $\pm$ 0.013 & 0.765 $\pm$ 0.015 \\
& $\mathrm{reward}_{\mathrm{rand}}$ & 0.387 $\pm$ 0.040 & 0.373 $\pm$ 0.041 & 0.390 $\pm$ 0.045 & 0.267 $\pm$ 0.038 & 0.264 $\pm$ 0.040 & 0.270 $\pm$ 0.043 \\ 
\hline

\multirow{5}{*}{\vtop{\hbox{\strut \emph{fourth}}\hbox{\strut ($1052,4,28$)}}}
& $\#\mathrm{iter}$ & 1360.43 $\pm$ 19.59 & 1203.26 $\pm$ 302.33 & \textbf{245.52 $\pm$ 30.55} & 2943.60 $\pm$ 2.57 & \textbf{112.29 $\pm$ 4.87} & 262.38 $\pm$ 65.98 \\
& $\#\mathrm{AA}$ & N/A & 1203.26 $\pm$ 302.33 & 24.03 $\pm$ 7.42 & N/A & 111.29 $\pm$ 4.87 & 61.36 $\pm$ 8.97 \\
& $t_{\mathrm{total}}$(sec) & 5.631 $\pm$ 0.085 & 6.160 $\pm$ 1.537 & \textbf{1.240 $\pm$ 0.156} & 82.376 $\pm$ 0.148 & \textbf{3.491 $\pm$ 0.152} & 8.153 $\pm$ 2.055 \\
& $\mathrm{reward}_{\mathrm{fixed}}$ & 0.596 $\pm$ 0.010 & 0.593 $\pm$ 0.011 & 0.594 $\pm$ 0.011 & 0.597 $\pm$ 0.011 & 0.513 $\pm$ 0.205 & 0.539 $\pm$ 0.170 \\
& $\mathrm{reward}_{\mathrm{rand}}$ & 0.368 $\pm$ 0.032 & 0.374 $\pm$ 0.034 & 0.371 $\pm$ 0.033 & 0.281 $\pm$ 0.033 & 0.277 $\pm$ 0.035 & 0.274 $\pm$ 0.038 \\   
\hline

\multirow{5}{*}{\vtop{\hbox{\strut \emph{tag}}\hbox{\strut ($870,5,30$)}}}
& $\#\mathrm{iter}$ & 315.61 $\pm$ 0.51 & 83.92 $\pm$ 6.02 & \textbf{96.06 $\pm$ 9.08} & 390.18 $\pm$ 0.38 & 93.86 $\pm$ 7.15 & \textbf{70.54 $\pm$ 2.27} \\
& $\#\mathrm{AA}$ & N/A & 82.92 $\pm$ 6.02 & 52.75 $\pm$ 7.17 & N/A & 92.86 $\pm$ 7.15 & 45.20 $\pm$ 1.89 \\
& $t_{\mathrm{total}}$(sec) & 1.533 $\pm$ 0.019 & 0.502 $\pm$ 0.037 & \textbf{0.564 $\pm$ 0.055} & 14.861 $\pm$ 0.029 & 3.944 $\pm$ 0.301 & \textbf{2.981 $\pm$ 0.098} \\
& $\mathrm{reward}_{\mathrm{fixed}}$ & $-17.258$ $\pm$ 0.685 & $-17.279$ $\pm$ 0.714 & $-17.268$ $\pm$ 0.680 & $-7.196$ $\pm$ 0.595 & $-7.147$ $\pm$ 0.577 & $-7.179$ $\pm$ 0.662 \\
& $\mathrm{reward}_{\mathrm{rand}}$ & $-15.867$ $\pm$ 0.839 & $-16.025$ $\pm$ 0.860 & $-15.970$ $\pm$ 0.887 & $-6.607$ $\pm$ 0.696 & $-6.632$ $\pm$ 0.653 & $-6.636$ $\pm$ 0.595 \\ 
\hline

\multirow{5}{*}{\vtop{\hbox{\strut \emph{underwater}}\hbox{\strut ($2653,6,102$)}}}
& $\#\mathrm{iter}$ & 446.24 $\pm$ 4.34 & 241.41 $\pm$ 16.57 & \textbf{79.03 $\pm$ 12.46} & 505.17 $\pm$ 0.38 & 225.45 $\pm$ 11.63 & \textbf{134.52 $\pm$ 12.93} \\
& $\#\mathrm{AA}$ & N/A & 240.41 $\pm$ 16.57 & 2.24 $\pm$ 0.76 & N/A & 224.45 $\pm$ 11.63 & 21.95 $\pm$ 1.92 \\
& $t_{\mathrm{total}}$(sec) & 15.398 $\pm$ 0.153 & 9.526 $\pm$ 0.655 & \textbf{3.074 $\pm$ 0.490} & 195.364 $\pm$ 0.586 & 90.840 $\pm$ 4.667 & \textbf{54.357 $\pm$ 5.244} \\
& $\mathrm{reward}_{\mathrm{fixed}}$ & $-51.776$ $\pm$ 46.712 & $-39.634$ $\pm$ 47.211 & $-46.334$ $\pm$ 49.199 & 681.926 $\pm$ 7.495 & 681.751 $\pm$ 9.405 & 680.746 $\pm$ 8.137 \\
& $\mathrm{reward}_{\mathrm{rand}}$ & 3235.844 $\pm$ 260.885 & 3305.371 $\pm$ 258.774 & 3263.062 $\pm$ 250.110 & 3254.562 $\pm$ 281.946 & 3258.112 $\pm$ 265.313 & 3270.237 $\pm$ 270.112 \\ 
\hline
\end{tabular}
}
\label{tab:aa-fib-maxent}
\end{center}
\end{table*}

The experimental results of soft FIB are presented in Table~\ref{tab:aa-fib-maxent}.
Note that AA-FIB without $\theta_\mathrm{targ}$ is equivalent to the AA-FIB algorithm in~\cite{ermis2021on}.
Overall, the number of iterations for (AA-)sFIB to converge to its fixed point is similar to that of (AA-)sQMDP,
but the computation time is much smaller in QMDP, as FIB has higher per-step computational complexity than QMDP.

Table~\ref{tab:aa-pomdp-mellow} presents the  results of our experiments for (AA-)kQMDP and (AA-)kFIB.
We first note that the fixed point of the kQMDP operator and that of the sQMDP operator only differs by a constant vector whose entries are identical, namely $\frac{\gamma\tau\ln|A|}{1-\gamma}\bm{1}$. Therefore, the policies obtained from these fixed points via~\eqref{eq:state-space-policy} are the same. Such a property is also observed in Table~\ref{tab:aa-pomdp-model} and Table~\ref{tab:aa-pomdp-mellow}, where (AA-)sQMDP and (AA-)kQMDP attain comparable cumulative rewards in all environments.
However, AA-kQMDP requires less per-iteration computational time compared to AA-sQMDP.
As a result, AA-kQMDP features lower $t_{\mathrm{total}}$ than AA-sQMDP.
A similar tendency is observed when comparing Table~\ref{tab:aa-fib-maxent} and Table~\ref{tab:aa-pomdp-mellow}, as $t_{\mathrm{total}}$ of  (AA-)kFIB is notably smaller than that of (AA-)sFIB.
For instance, using AA-sFIB in place of AA-kFIB in \emph{tag}, $t_{\mathrm{total}}$ decreases from 2.981sec to 0.725sec, which amounts to 75\% saving of the total computation time. In particular, AA-kFIB with $\theta_{\mathrm{targ}}$ (the last column of Table~\ref{tab:aa-pomdp-mellow}) spends only a half amount of $t_{\mathrm{total}}$ that is needed for the standard FIB (the first column of Table~\ref{tab:aa-fib-maxent}), while achieving higher $\mathrm{reward}_{\mathrm{fixed}}$.

\begin{table*}[t!] 
\caption{Experiment results: KL-regularized methods ($\mathrm{mean} \pm \mathrm{std}$).}
\begin{center}
\scalebox{0.6}{
\begin{tabular}{|l|l|c|c|c|c|c|c|}
\hline
& & \multicolumn{6}{c|}{\textbf{Algorithm}}  \\  \hline
\multirow{2}{*}{\vtop{\hbox{\strut \textbf{Problem}}\hbox{\strut \textbf{($|\mathcal{S}|, |\mathcal{A}|, |\mathcal{O}|$)}}}} 
& \multirow{2}{*}{\textbf{Metric}} & \multirow{2}{*}{\textbf{kQMDP}} & \multicolumn{2}{c|}{\textbf{AA-kQMDP}}
& \multirow{2}{*}{\textbf{kFIB}} & \multicolumn{2}{c|}{\textbf{AA-kFIB}} \\
& & & without $\theta_\mathrm{targ}$ & with $\theta_\mathrm{targ}$ & & without $\theta_\mathrm{targ}$ & with $\theta_\mathrm{targ}$ \\\hline \hline


\multirow{5}{*}{\vtop{\hbox{\strut \emph{sunysb}}\hbox{\strut ($300,4,28$)}}}
& $\#\mathrm{iter}$ & 1287.78 $\pm$ 31.00 & \textbf{86.23 $\pm$ 2.56} & 86.28 $\pm$ 2.50 & 1284.04 $\pm$ 32.39 & 86.73 $\pm$ 2.58 & \textbf{86.57 $\pm$ 2.65} \\
& $\#\mathrm{AA}$ & N/A & 85.23 $\pm$ 2.56 & 71.42 $\pm$ 2.41 & N/A & 85.73 $\pm$ 2.58 & 71.76 $\pm$ 2.65 \\
& $t_{\mathrm{total}}$(sec) & 0.267 $\pm$ 0.007 & \textbf{0.051 $\pm$ 0.002} & 0.051 $\pm$ 0.003 & 6.730 $\pm$ 0.173 & 0.527 $\pm$ 0.016 & \textbf{0.517 $\pm$ 0.016} \\
& $\mathrm{reward}_{\mathrm{fixed}}$ & 0.763 $\pm$ 0.015 & 0.762 $\pm$ 0.015 & 0.765 $\pm$ 0.015 & 0.765 $\pm$ 0.015 & 0.762 $\pm$ 0.015 & 0.764 $\pm$ 0.014 \\
& $\mathrm{reward}_{\mathrm{rand}}$ & 0.303 $\pm$ 0.039 & 0.300 $\pm$ 0.039 & 0.310 $\pm$ 0.050 & 0.263 $\pm$ 0.039 & 0.267 $\pm$ 0.046 & 0.269 $\pm$ 0.046 \\ 
\hline

\multirow{5}{*}{\vtop{\hbox{\strut \emph{fourth}}\hbox{\strut ($1052,4,28$)}}}
& $\#\mathrm{iter}$ & 1363.33 $\pm$ 1.19 & 96.48 $\pm$ 1.82 & \textbf{96.19 $\pm$ 6.10} & 1285.90 $\pm$ 27.11 & \textbf{89.02 $\pm$ 1.86} & 92.07 $\pm$ 2.20 \\
& $\#\mathrm{AA}$ & N/A & 95.48 $\pm$ 1.82 & 74.63 $\pm$ 6.76 & N/A & 88.02 $\pm$ 1.86 & 72.65 $\pm$ 2.17 \\
& $t_{\mathrm{total}}$(sec) & 0.449 $\pm$ 0.003 & 0.093 $\pm$ 0.003 & \textbf{0.092 $\pm$ 0.006} & 11.070 $\pm$ 0.249 & \textbf{0.884 $\pm$ 0.019} & 0.913 $\pm$ 0.023 \\
& $\mathrm{reward}_{\mathrm{fixed}}$ & 0.602 $\pm$ 0.008 & 0.595 $\pm$ 0.060 & 0.599 $\pm$ 0.008 & 0.597 $\pm$ 0.009 & 0.493 $\pm$ 0.221 & 0.551 $\pm$ 0.151 \\
& $\mathrm{reward}_{\mathrm{rand}}$ & 0.323 $\pm$ 0.037 & 0.309 $\pm$ 0.045 & 0.305 $\pm$ 0.033 & 0.285 $\pm$ 0.035 & 0.274 $\pm$ 0.039 & 0.280 $\pm$ 0.037 \\ 
\hline

\multirow{5}{*}{\vtop{\hbox{\strut \emph{tag}}\hbox{\strut ($870,5,30$)}}}
& $\#\mathrm{iter}$ & 304.98 $\pm$ 3.06 & 58.03 $\pm$ 1.21 & \textbf{57.93 $\pm$ 1.12} & 303.97 $\pm$ 3.06 & 57.81 $\pm$ 1.40 & \textbf{57.77 $\pm$ 1.26} \\
& $\#\mathrm{AA}$ & N/A & 57.03 $\pm$ 1.21 & 40.96 $\pm$ 1.10 & N/A & 56.81 $\pm$ 1.40 & 40.98 $\pm$ 1.16 \\
& $t_{\mathrm{total}}$(sec) & 0.131 $\pm$ 0.002 & 0.068 $\pm$ 0.003 & \textbf{0.068 $\pm$ 0.003} & 3.331 $\pm$ 0.037 & 0.735 $\pm$ 0.018 & \textbf{0.725 $\pm$ 0.016} \\
& $\mathrm{reward}_{\mathrm{fixed}}$ & $-6.609$ $\pm$ 0.568 & $-6.757$ $\pm$ 0.666 & $-6.835$ $\pm$ 0.566 & $-6.571$ $\pm$ 0.645 & $-6.684$ $\pm$ 0.620 & $-6.717$ $\pm$ 0.649 \\
& $\mathrm{reward}_{\mathrm{rand}}$ & $-6.369$ $\pm$ 0.588 & $-6.405$ $\pm$ 0.592 & $-6.377$ $\pm$ 0.600 & $-6.455$ $\pm$ 0.710 & $-6.376$ $\pm$ 0.647 & $-6.293$ $\pm$ 0.558 \\ 
\hline

\multirow{5}{*}{\vtop{\hbox{\strut \emph{underwater}}\hbox{\strut ($2653,6,102$)}}}
& $\#\mathrm{iter}$ & 435.86 $\pm$ 7.97 & 159.77 $\pm$ 6.28 & \textbf{122.25 $\pm$ 2.29} & 446.36 $\pm$ 4.09 & 171.23 $\pm$ 7.63 & \textbf{133.02 $\pm$ 12.41} \\
& $\#\mathrm{AA}$ & N/A & 158.77 $\pm$ 6.28 & 18.91 $\pm$ 1.48 & N/A & 170.23 $\pm$ 7.63 & 22.28 $\pm$ 1.97 \\
& $t_{\mathrm{total}}$(sec) & 0.486 $\pm$ 0.009 & 0.543 $\pm$ 0.021 & \textbf{0.416 $\pm$ 0.011} & 44.588 $\pm$ 0.431 & 18.713 $\pm$ 0.830 & \textbf{14.723 $\pm$ 1.388} \\
& $\mathrm{reward}_{\mathrm{fixed}}$ & 685.038 $\pm$ 8.047 & 683.502 $\pm$ 7.883 & 684.241 $\pm$ 8.564 & 680.497 $\pm$ 9.450 & 680.785 $\pm$ 8.376 & 680.551 $\pm$ 9.169 \\
& $\mathrm{reward}_{\mathrm{rand}}$ & 3245.707 $\pm$ 230.666 & 3225.733 $\pm$ 257.239 & 3276.819 $\pm$ 246.870  & 3214.878 $\pm$ 254.658 & 3219.718 $\pm$ 279.785 & 3258.293 $\pm$ 276.259 \\ 
\hline
\end{tabular}
}
\label{tab:aa-pomdp-mellow}
\end{center}
\end{table*}

\bibliographystyle{IEEEtran}
\bibliography{reference}

\end{document}